\documentclass{article}

\usepackage[left=2cm, right= 2cm, top= 2cm, bottom= 2cm]{geometry}
\usepackage{pdflscape}
\usepackage{palatino}

\usepackage{graphicx} 
\usepackage{amsmath}
\usepackage{mathtools}
\usepackage{amsfonts}
\usepackage{amssymb}
\usepackage{amsthm}
\usepackage{stackrel}
\usepackage{float}

\usepackage{hyperref}

\usepackage[title]{appendix}

\usepackage[dvipsnames]{xcolor}

\usepackage[nameinlink]{cleveref}
\Crefname{section}{Section}{Sections}
\Crefname{appendix}{Appendix}{Appendices}
\Crefname{Algo}{Algorithm}{Algorithms}
\Crefname{Def}{Definition}{Definitions}
\Crefname{Thm}{Theorem}{Theorems}
\Crefname{Pro}{Problem}{Problems}
\Crefname{Prop}{Proposition}{Propositions}
\Crefname{Lem}{Lemma}{Lemmas}
\Crefname{Rm}{Remark}{Remarks}

\newtheorem{Def}{Definition}
\newtheorem{Thm}{Theorem}

\newtheorem{Prop}{Proposition}

\newtheorem{Coro}{Corollary}
\newtheorem{Lem}{Lemma}
\newtheorem{Rm}{Remark}

\newcommand{\ket}[1]{|{#1}\rangle}
\newcommand{\bra}[1]{\langle{#1}|}
\newcommand{\bk}[2]{\langle{#1}|{#2}\rangle}
\newcommand{\kb}[2]{\ket{#1}\bra{#2}}

\newcommand{\CC}{\mathbb{C}}
\newcommand{\RR}{\mathbb{R}}
\newcommand{\NN}{\mathbb{N}_0}

\title{Compressed Quantum Operators and Roots of Hermite Polynomials} 

\author{Serge Adonsou$^{1,2}$, Guillaume Dauphinais$^2$, David W. Kribs$^{1,2}$\footnote{Corresponding author: \href{mailto:dkribs@uoguelph.ca}{dkribs@uoguelph.ca}}, and Rajesh Pereira$^{1}$ }
\date{\small
$^1$Department of Mathematics \& Statistics, University of Guelph, Guelph, ON N1G 2W1, Canada \\
$^2$Xanadu, Toronto, ON M5G 2C8, Canada \\
\normalsize \today
}

\begin{document}

\maketitle

\begin{abstract}
The fundamental position and momentum operators of quantum mechanics are unbounded, but finite rank compressions of the operators can be considered to obtain partial information on the operators and their properties. 
Motivated by problems in photonic quantum computing, we bring together results from quantum theory and the theory of orthogonal polynomials to show that a natural finite rank compression of the position and momentum operator representation on Fock space Hilbert space has eigenvalues given by roots of the classical Hermite polynomials. 
We discuss the corresponding compressed displacement operators and potential applications in approximate quantum error correction. 
\end{abstract}


\section{Introduction}

The position and momentum operators are foundational objects of study in quantum mechanics and quantum optics. They are unbounded operators and as such they present a number of challenges with attempts to investigate them. This is of particular importance in the study of certain quantum optical systems used for reliable quantum information processing in a realistic setting, in particular using the Gottesman-Kitaev-Preskill states~\cite{gkp, IC-gkp, log-gates-GKP, BRADY2024100496, Bourassa2021blueprintscalable, ibarcq, Eickbusch, sivak, PRXQuantum.3.030301, PhysRevLett.132.150607, doi:10.1126/science.adk7560, fluhmann, deNeeve}. These are defined as eigenstates of the complex exponential of the position and momentum operators. One way to obtain partial information on problems of interest is to consider selective finite rank compressions of these operators and other related `quantum operators' in their representations on Fock space Hilbert space.  

From another direction, we have the theory of orthogonal polynomials which is a rich area of classical and modern mathematics with far reaching connections across the sciences \cite{chihara2011introduction,szeg1939orthogonal}. The Hermite polynomials are an important family of such polynomials. They arise in many settings, including in quantum theory as the eigenfunctions of the Hamiltonian of the quantum harmonic oscillator. 
These topics are clearly related, and one can find results at their intersection spread across the mathematics and physics literature, but, in our work towards developing a better understanding of error correction problems in photonic quantum computing, we have identified a need to bring together some of the key results in a novel expository self-contained presentation.  

In this paper, we clarify a connection between these quantum operators and the Hermite polynomials, with an eye toward helping to generate further progress in the infinite-dimensional quantum error correction landscape of photonic quantum computing. We show how a natural finite rank compression of the position and momentum operator representation on Fock space Hilbert space has eigenvalues given by roots of the classical Hermite polynomials. 
Our presentation has an expository emphasis, in that we carefully lay out the mathematical foundation for these operators, the orthogonal polynomials, and their relevant relations. We also show how the results yield details on quantum displacement operators that can be used in approximate quantum error correction.

This paper is organized as follows. The next section includes a brief review of the relevant background details on Hermite polynomials and the position and momentum operators. In Section~3, we review the generalized eigenvectors of the operators, and we show how the Hermite polynomials arise in their coefficients. Section~4 presents details of the result that the Hermite polynomial roots are the eigenvalues of the finite-rank compressions. We also discuss the density of the roots in that section as a consequence. In Section~5, we consider the connection with quantum error correction given by approximations of the compressed displacement operators.

\section{Background}

We begin by reviewing the relevant details of Hermite polynomials and of the position and momentum operators and observables from quantum mechanics.

\subsection{Hermite Polynomials}\label{HPF}

The class of Hermite polynomials were considered as far back as 1810 by Laplace \cite{laplace1810memoire, marquis1820theorie} and investigated by Chebyshev \cite{chebyshev1859developpement} in 1859, before ultimately being studied by their namesake Charles Hermite in 1864 \cite{hermite1864nouveau}. 
There are two ways to define the Hermite Polynomials, which are scalar multiples of each other; the `Probabilist' Hermite polynomials and `Physicist' Hermite polynomials. They are defined as follows:
\begin{Def}[Hermite Polynomials]
    The functions given by, for $n\geq 0$, 
    \begin{equation*}
        h_n(x) = (-1)^ne^{\frac{x^2}{2}}\frac{d^n}{dx^n}e^{-\frac{x^2}{2}}
    \end{equation*}
    and 
    \begin{equation*}
        H_n(x) = (-1)^ne^{x^2}\frac{d^n}{dx^n}e^{-x^2}
    \end{equation*}
    are polynomials of degree $n$, respectively called the Probabilist Hermite polynomial and the Physicist Hermite polynomial of order $n$.
\end{Def}
The Physicist Hermite polynomial $H_n$ and the Probabilist Hermite polynomial $h_n$ are degree-$n$ polynomials and are related to one another via the following relation: 
\begin{equation*}
    H_n (x ) = 2^{\frac{n}{2}}h_{n}(\sqrt{2} x).
\end{equation*}
In particular, note the probabilist version is monic. 
By induction, one obtains the following important properties of the Physicist Hermite Polynomials given in \cite{szeg1939orthogonal}.

\begin{Prop}\label{prop:rec_Hn_Hn-1}
    For all nonnegative integers $n$, we have the following recursive formulas between successive  terms:
    \begin{align*}
        H'_n(x) & = 2nH_{n-1}(x)\\
        H_{n}(x) &= 2xH_{n-1}(x)-H'_{n-1}(x) . 
    \end{align*}
\end{Prop}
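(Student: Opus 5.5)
We would prove both identities directly from the Rodrigues-type definition $H_n(x)=(-1)^n e^{x^2}\frac{d^n}{dx^n}e^{-x^2}$. Throughout we write $D=\frac{d}{dx}$ and $g(x)=e^{-x^2}$, so that $H_n = (-1)^n e^{x^2} D^n g$. For the first identity we only need it for $n\ge 1$; at $n=0$ both sides vanish under the convention $H_{-1}=0$. The second identity is meant for $n\geq 1$.

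\textbf{Second identity.} This follows from differentiating the definition once. Write $H_{n-1}(x)=(-1)^{n-1}e^{x^2}D^{n-1}g$ and apply the product rule:
\begin{equation*}
H_{n-1}'(x) = (-1)^{n-1}\left(2x e^{x^2} D^{n-1}g + e^{x^2}D^{n}g\right) = 2xH_{n-1}(x) - H_n(x).
\end{equation*}
Rearranging gives $H_n = 2xH_{n-1}-H_{n-1}'$. As a by-product, induction from $H_0=1$ confirms that each $H_n$ is a polynomial of degree $n$ with leading coefficient $2^n$.

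\textbf{First identity.} The key step is a commutation relation for $D^n$ against the factor $x$. Since $g'=-2xg$, differentiating $n$ more times using Leibniz's rule (only the terms with $x$ and with its first derivative survive) gives
\begin{equation*}
D^{n+1}g = -2x\,D^n g - 2n\,D^{n-1}g.
\end{equation*}
Multiplying by $(-1)^{n+1}e^{x^2}$ yields the three-term recurrence
\begin{equation*}
H_{n+1}=2xH_n-2nH_{n-1}.
\end{equation*}
Comparing this with the second identity at index $n+1$, namely $H_{n+1}=2xH_n-H_n'$, gives $H_n'=2nH_{n-1}$.

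An alternative route is induction on $n$. Differentiate $H_n=2xH_{n-1}-H_{n-1}'$ to get $H_n'=2H_{n-1}+2xH_{n-1}'-H_{n-1}''$. Then substitute the inductive hypothesis $H_{n-1}'=2(n-1)H_{n-2}$ and use the second identity at index $n-1$. We prefer the Leibniz argument because it is shorter.

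The only delicate point is the Leibniz computation of $D^n(xg)$. It is routine: $D^n(xg)=xD^ng+nD^{n-1}g$, because the higher derivatives of $x$ vanish. The small-$n$ edge cases ($n=0,1$) are checked directly from $H_0=1$ and $H_1=2x$.
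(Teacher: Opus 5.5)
Your proof is correct, but it is organized differently from the paper. The paper offers only ``by induction'' (citing Szeg\H{o}) and then treats the three-term recurrence $H_{n+2}=2xH_{n+1}-2(n+1)H_n$ as an immediate \emph{consequence} of the proposition.

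You reverse that order. The identity $H_n=2xH_{n-1}-H_{n-1}'$ comes from one product-rule differentiation of the Rodrigues formula. The three-term recurrence comes independently from Leibniz's rule applied to $D^n(-2xg)$. Then $H_n'=2nH_{n-1}$ drops out by comparing the two expressions for $H_{n+1}$.

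This gives an explicit, induction-free derivation straight from the definition. It also proves the three-term recurrence directly from the definition rather than deducing it from the proposition. That recurrence is what the paper actually relies on later: for the probabilist recurrence, the partial-sum lemma, and the determinant expansion in Section~4.

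The inductive alternative you sketch is presumably what the paper's ``by induction'' means: differentiate the second identity, insert $H_{n-1}'=2(n-1)H_{n-2}$, and reuse the second identity at index $n-1$. It avoids higher-order Leibniz expansions and keeps the two first-order relations as the primitive facts.

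Your restriction of the second identity to $n\geq 1$ is also the right reading of the statement. With the convention $H_{-1}=0$, the case $n=0$ would give $H_0=0$.
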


The following formula is an immediate consequence of this result: 
\begin{equation*}
    H_{n+2}(x) = 2xH_{n+1}(x) -2(n+1)H_{n}(x),
\end{equation*}
which in turn implies that
\[
    h_{n+2}(x)  = xh_{n+1}(x) -(n+1)h_{n}(x).
\]

We also note that the Hermite polynomials define a family of orthogonal polynomials \cite{szeg1939orthogonal} with respect to the weight function $e^{-x^2}$; in particular, 
\begin{equation*}
    \int_{-\infty}^{\infty} e^{-x^2}H_n(x)H_m(x) dx = 2^nn!\sqrt{\pi}\, \delta_{n,m} .
\end{equation*}
One can use this relation to define the set of Hermite functions, which form an orthonormal basis for $L^2(\RR)$. 

\begin{Def}[Hermite functions]
   The sequence of functions $\phi_n$, for $n\geq 0$, defined on the real line by 
   \begin{equation}
       \phi_n(x) = \dfrac{e^{-\frac{x^2}{2}}}{\sqrt{2^nn!\sqrt{\pi}}}H_n(x) , 
   \end{equation}
   are called the Hermite functions.
\end{Def}

\subsection{Position and Momentum Operators}

We will use the following standard notation from quantum mechanics for elements from operator spectral theory: 

\strut 

\hspace{0.5in}$\ket{x}_O$ : eigenvector (ket) of observable $\hat{O}$ corresponding to eigenvalue $x$

\hspace{0.5in}$\bra{x}_O$ : dual vector (bra) associated to ket $\ket{x}_O$

\hspace{0.5in}${}_O\bk{x}{y}_{O'}$: scalar product between $\ket{x}_O$ and $\ket{y}_{O'}$

\hspace{0.5in}$\ket{x}_O\bra{y}_{O'}$: outer (operator) product

\strut 

The quantum harmonic oscillator as a quantum system is represented on the Hilbert space of square integrable complex-valued functions, 
\begin{equation*}
    L^2(\RR) = \left\{f: \RR \longrightarrow \CC: \int_{\RR} |f(x)|^2 dx < \infty\right\}.
\end{equation*}
The elements of the space give a mathematical expression of the wavefunction. The Hilbert space is separable \cite{brezis2011functional}, and as such has a countable orthonormal basis. (An example of such a countable subset is the set of  Hermite functions.) 
In particular, the separability of $L^2(\RR)$ implies that it is isometrically isomorphic to $\ell^2$,  the set of square summable series indexed by the nonnegative integers,
\begin{equation*}
\ell^2 = \left\{(\psi_n)_{n \geq 0} :\sum_{n \geq 0} |\psi_n|^2< \infty\right\}, \label{eq:l^2}
\end{equation*}
and where the isomorphism is implemented by identifying a set of basis elements for the two spaces with each other elementwise. 
We will denote the standard basis of $\ell^2$ by $\{ \ket{n} \}$ and write 
\begin{equation*}
    \ket{\psi} = \sum_{n \geq 0} \psi_n\ket{n}\label{eq:expansion_in_ketx}
\end{equation*}
for a generic element of $\ell^2$. 
As a default assumption, we will take the isomorphism as given by identifying the elements $\phi_n$ with $\ket{n}$. 

The action of observables on these states being of critical importance, we need to derive a formal expression for the position and the momentum operators on the states. Here we will also bring in the Fock basis.  
By definition, the position operator $\hat{q}$ and the momentum operator $\hat{p}$ as linear transformations on $L^2(\RR)$ are given, respectively, by
\begin{align}
    (\hat{q}\psi)(x) &= x \psi(x)\\
    (\hat{p}\psi)(x) &= -i \frac{d \psi}{d x}(x) . 
\end{align}
As we can express $\psi$ linearly with respect to the Hermite basis, we only need to express $\hat{q}\phi_n$ and $\hat{p}\phi_n$ in the Hermite basis to determine the ranges of the operators. One can verify the following equations for all $n>0$, which follow directly from the Hermite function identities: 
\[
    x\phi_n(x) = \sqrt{\frac{n}{2}}\phi_{n-1}(x) + \sqrt{\frac{n+1}{2}}\phi_{n+1}(x),\quad
    \phi_n'(x) = \sqrt{\frac{n}{2}}\phi_{n-1}(x) - \sqrt{\frac{n+1}{2}}\phi_{n+1}(x).
\]
In other words, we can express the position and momentum operators in the Fock space as follows 
\begin{align}
\hat q \ket{n} &= \sqrt{\frac{n}{2}}\ket{n-1} + \sqrt{\frac{n+1}{2}}\ket{n+1},\label{eq:xphi}\\
    \hat p \ket{n} &= \sqrt{\frac{n}{2}}\ket{n-1} - \sqrt{\frac{n+1}{2}}\ket{n+1}.\label{eq:phiprime}
\end{align}

We can also form the creation operators $\hat{a}^\dagger$ and the annihilation operators $\hat{a}$ on the Fock space, with basis also denoted for our purposes by $\{ \ket{n} \}_{n \geq 0}$. This allows us to obtain the expression of the position $\hat{q}$ and the momentum $\hat{p}$ operators. Indeed, the annihilation and creation operators are defined by their action on the Fock basis states as follows: 
\begin{align}
    \hat{a}\ket{n}& = \left\{ \begin{array}{cl} \sqrt{n} \,\, \ket{n-1} & \mbox{for}\, n\geq 1 \\ 0 & \mbox{for}\,\, n= 0, \end{array} \right. \label{op:annihilation}
\end{align}
and, 
\begin{align}
    \hat{a}^\dagger \ket{n}& = \sqrt{n+1}\, \ket{n+1} \,\,\,\, \mbox{for}\,\, n\geq 0.\label{op:creation}
\end{align}

Using these two expressions, we can write the position and momentum as expressions of the annihilation and creation operators as follows, 
\begin{align}
    \hat{q} &= \frac{1}{\sqrt{2}}\left(\hat{a}+\hat{a}^\dag\right),\label{op:posit_fct_ac}\\
    \hat{p} &= \frac{i}{\sqrt{2}}\left(-\hat{a}+\hat{a}^\dag\right).\label{op:moment_fct_ac}
\end{align}
The well-known commutation relationship between these two operators is then found by straightforward computation and given by: 
\begin{equation*}
    [\hat{q},\hat{p}] = iI,
\end{equation*}
which correspondingly gives the commutation relation between the annihilation and creation operators;  
\begin{equation*}
    [\hat{a}, \hat{a}^\dagger] = I.
\end{equation*}

Several other important operators in quantum mechanics are similarly determined by the position and momentum operators. For instance, the so-called number operator is diagonalized in the Fock basis and given by
\begin{equation*}
    \hat{n} = \hat{a}^\dagger\hat{a}=\frac{1}{2}\left(\hat{q}^2+\hat{p}^2-I\right) = \hat{H} -\frac{1}{2}, 
\end{equation*}
where $\hat{H} = (1/2)(\hat{q}^2+\hat{p}^2)$ is the Hamiltonian for the system. Further, the Fourier transform is given by 
\begin{equation*}
    F = \sum_{n\geq 0} (-i)^n\kb{n}{n} . 
\end{equation*}

\section{Hermite Polynomials and Generalized Eigenvectors}

The position and momentum operators do not have convergent eigenvectors, but nevertheless the formal (divergent) forms are expressible in terms of the Hermite polynomials. We review this folklore-type result in this section as it naturally leads into what follows. 

\subsection{`Eigenvalues' and `Eigenvectors' of the Position and Momentum Operators}

We will focus on the position operator for this derivation. 
We begin by noting from the last section that the position operator acts on basis elements (identified with the Hermite functions) of $L^2(\mathbb{R})$ by,    
\begin{align}
    \hat{q}\ket{0}&= \frac{1}{\sqrt{2}} \ket{1}\label{eq:q_on_0}\\ 
    \hat{q}\ket{n}& = \frac{1}{\sqrt{2}}\sqrt{n}\ket{n-1}+\frac{1}{\sqrt{2}}\sqrt{n+1}\ket{n+1},\quad \mbox{for}\,\, n\geq 1. \label{eq:q_on_n}
\end{align}
It is well known that the position operator, even being self-adjoint, does not have eigenvectors in the space of square-integrable functions. Let us investigate what form such eigenvectors would have, independent of convergence considerations. 

We start with the eigenvector-eigenvalue equation given by 
$    \hat{q} \ket{\psi}= \lambda \ket{\psi}, $
for a real number $\lambda$ and non-zero vector state $\ket{\psi}$ that we assume has decomposition in terms of the basis states given by  
$\ket{\psi} = \sum_{n\geq 0} x_n \ket{n}.$
We can then rewrite this equation as: 
\[
    \hat{q}\left( \sum_{n\geq 0} x_n \ket{n}\right) = 
    \frac{x_1}{\sqrt{2}}\ket{0} +\sum_{n\geq 1} \frac{1}{\sqrt{2}}\left(x_{n+1}\sqrt{n+1}+x_{n-1}\sqrt{n}\right)\ket{n} = \sum_{n\geq 0} \lambda x_n \ket{n}.
\]
By equating coefficients corresponding to the same basis elements, we obtain the following equations: 
\begin{align*}
    \frac{x_1}{\sqrt{2}} & = \lambda x_0,\\
    \frac{1}{\sqrt{2}}\left(x_{n+1}\sqrt{n+1}+x_{n-1}\sqrt{n}\right)& = \lambda x_n, \quad \mbox{for} \,\, n\geq 1. 
\end{align*}
We perform a change of variable to get $\delta = \sqrt{2}\lambda$, and this gives:
\begin{align*}
    x_1 & = \delta x_0,\\
    x_{n+1}\sqrt{n+1}+x_{n-1}\sqrt{n}& = \delta x_n, \quad \mbox{for} \,\, n\geq 1.
\end{align*}

We claim that 
\begin{equation}
    x_n\sqrt{n!} = h_n(\delta) x_0
\end{equation}
where $h_n$ is the Probabilistic Hermite polynomial. We can prove this claim using induction.
\begin{itemize}
    \item Base case: for $n =1$, we get $x_1 = \delta x_0$ as expected. 
    \item Induction: Assuming that for all $k\leq n$, $x_k\sqrt{k!} = h_k(\delta)x_0$, let us show that $x_{n+1}\sqrt{n+1!} = h_{n+1}(\delta) x_0$. Indeed, we have
    \begin{equation}
        x_{n+1}\sqrt{n+1}+x_{n-1}\sqrt{n} = \delta x_n
    \end{equation}
    By multiplying both side by $\sqrt{n!}$, we get
    \begin{align*}
        x_{n+1}\sqrt{n+1!}+x_{n-1}n\sqrt{n-1!} &= \delta x_n\sqrt{n!},\nonumber\\
        x_{n+1}\sqrt{n+1!}+nh_{n-1}(\delta)x_0&= \delta h_n(\delta)x_0,\nonumber\\
        x_{n+1}\sqrt{n+1!} &= \left[\delta h_n(\delta)-nh_{n-1}(\delta)\right]x_0,\\ 
        x_{n+1}\sqrt{n+1!} &= h_{n+1}(\delta)x_0.
    \end{align*}
    The last equality is obtained from the recurrence formula verified by the Probabilist Hermite polynomial. This proves that the claim is true. 

\end{itemize}

Now that the claim is proved, we can also rewrite it in terms of the Physicist Hermite polynomial, which gives
\begin{equation*}
    x_n = \frac{1}{\sqrt{2^nn!}}H_n(\lambda)x_0.
\end{equation*}
This means that for given eigenvalue $\lambda$, the associated (formal) eigenvector $\ket{\lambda}_q$ satisfies:  
\begin{equation*}
    \ket{\lambda}_q := \sum_{n\geq 0}\frac{1}{ \sqrt{2^nn!}}H_n(\lambda) \ket{n}.\label{eq:q_eigen_serie_expression}
\end{equation*}
We claim that for all real numbers $\lambda$, the series \[\sum_{n\geq 0}\frac{1}{2^nn!}H_n(\lambda)^2 =  \infty.\]
Indeed, in general for all positive numbers $r$, we have from Mehler's Theorem that 
\begin{equation*}
    \sum_{n\geq 0}\frac{r^n}{2^nn!}H_n(\lambda)^2 = \frac{1}{\sqrt{1-r^2}}\exp\left(\frac{r^2-r}{r-1}2\lambda^2\right).
\end{equation*}
Note that the limit of the exponential on the right-hand side when $r\to 1$ is $\exp(2\lambda^2)$ but $(1-r^2)^{-\frac{1}{2}}$ tends to infinity. Thus, using the monotone convergence theorem, one can permute the limit with respect to $r\to 1$ and the summation, which justifies the divergence claimed. Therefore, the position operator has no eigenvectors belonging to the space of square summable sequences. 

The same is true for the momentum operator, with the corresponding equation given by: 
\begin{equation*}
 \hat{p}\left( \sum_{n\geq 0} x_n \ket{n}\right) =    -\frac{ix_1}{\sqrt{2}}\ket{0} +i\sum_{n\geq 1} \frac{1}{\sqrt{2}}\left(-x_{n+1}\sqrt{n+1}+x_{n-1}\sqrt{n}\right)\ket{n} = \sum_{n\geq 0} \lambda x_n \ket{n}.
\end{equation*}
Thus, in that case we get
\begin{align*}
    \frac{x_1}{\sqrt{2}} & = i\lambda x_0,\\
    \frac{1}{\sqrt{2}}\left(x_{n+1}\sqrt{n+1}-x_{n-1}\sqrt{n}\right)& = i\lambda x_n, \quad \mbox{for}\,\, n\geq 1. 
\end{align*}
One can prove by induction as above that for all $n\geq 0$, we have 
\begin{equation*}
    x_n = \frac{i^n}{\sqrt{2^nn!}}H_n(\lambda)x_0.
\end{equation*}
Hence, if the eigenvectors of the momentum operator associated to the eigenvalue $\lambda$ exist, they would be given by:
\begin{equation*}
    \ket{\lambda}_p := \sum_{n\geq 0}\frac{i^n}{ \sqrt{2^nn!}}H_n(\lambda) \ket{n}.
\end{equation*}

It follows that these sums do not converge. 

\begin{Rm} 
In fact, formally $\ket{\lambda}_q$ and $\ket{\lambda}_p$ reside in a space containing the Hilbert space of square-summable sequences, defined through their actions as linear functionals. This space is denoted by $s^\times$, the space of sequences bounded by a polynomial sequence (i.e., slow increasing sequences), which is identified as the antidual of $s$, the space (called Schwartz space) of rapidly decreasing sequences (see \cite{reedsimonmethods}). More specifically, we have $s\subset l^2\subset s^\times$, known as a  Gelfand triplet, with 
\begin{equation}
    s = \left\{\ket{\psi}\in \CC^{\NN}: \sum_{n\geq 0}(n+1)^{2m}|\psi_n|^2< \infty\mbox{ for all } m\in \NN\right\}. \label{set:SN}
\end{equation}
\end{Rm}

\subsection{Partial Sums of the `Eigenvectors'}

We have seen in the previous section that the series giving the eigenvectors of position and momentum operators does not converge. Here we consider their partial sums, which will give us some motivation for what follows. We have the following lemma that tells us how much the partial summation from \Cref{eq:q_eigen_serie_expression} deviates in the eigenvector-eigenvalue equation.

\begin{Lem}\label{lem:q_on_partial_lambda_k}
    For all integers $k \geq 1$, we have
    \begin{equation}
        \hat{q}\left(\sum_{n = 0}^{k-1}\frac{1}{ \sqrt{2^nn!}}H_n(\lambda) \ket{n}\right) = \lambda \sum_{n = 0}^{k-1}\frac{1}{ \sqrt{2^nn!}}H_n(\lambda) \ket{n}+ \frac{H_k(\lambda)}{2\sqrt{2^{k-1}(k-1)!}}\ket{k-1} + \frac{k}{ \sqrt{2^{k}k!}}H_{k-1}(\lambda) \ket{k}.
    \end{equation}
\end{Lem}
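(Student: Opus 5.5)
The plan is to compare the truncated vector with the formal eigenvector $\ket{\lambda}_q$ from the previous subsection, so that the error terms come only from the two ends of the truncation. Write $c_n = H_n(\lambda)/\sqrt{2^n n!}$ and $\ket{\psi_k} = \sum_{n=0}^{k-1} c_n\ket{n}$. I will apply \eqref{eq:q_on_0} and \eqref{eq:q_on_n} term by term to $\ket{\psi_k}$, which is a finite sum, so no convergence issues arise. Then I regroup by basis vector:
\[
\hat q\ket{\psi_k} = \sum_{m=0}^{k} \frac{1}{\sqrt 2}\Big(c_{m+1}\sqrt{m+1}\,[m+1\le k-1] + c_{m-1}\sqrt{m}\,[m\ge 1]\Big)\ket{m},
\]
where $[\cdot]$ equals $1$ if the condition holds and $0$ otherwise.

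Next I use the recurrence already established for the coefficients of the formal eigenvector. For every $m\ge 0$ it reads $\frac{1}{\sqrt2}\big(c_{m+1}\sqrt{m+1}+c_{m-1}\sqrt m\big)=\lambda c_m$, with the convention $c_{-1}=0$. It follows from the three-term recurrence $H_{n+2}=2xH_{n+1}-2(n+1)H_n$, or equivalently from the induction $x_n\sqrt{n!}=h_n(\delta)x_0$ carried out above. Using it, the coefficient of $\ket{m}$ equals $\lambda c_m$ for every $0\le m\le k-2$. Only two positions differ:
\begin{itemize}
\item For $m=k-1$, the term $c_k\sqrt{k}$ is absent, so the coefficient is $\lambda c_{k-1} - \frac{\sqrt{k}}{\sqrt2}c_k$.
\item For $m=k$, there is a new term $\frac{\sqrt k}{\sqrt 2}c_{k-1}$ with nothing to cancel it.
\end{itemize}

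Finally I simplify the two constants. The first is
\[
\frac{\sqrt k}{\sqrt2}\cdot\frac{H_k(\lambda)}{\sqrt{2^k k!}} = \frac{H_k(\lambda)}{2\sqrt{2^{k-1}(k-1)!}},
\]
and the second is
\[
\frac{\sqrt k}{\sqrt 2}\cdot \frac{H_{k-1}(\lambda)}{\sqrt{2^{k-1}(k-1)!}} = \frac{k\,H_{k-1}(\lambda)}{\sqrt{2^k k!}}.
\]
Both magnitudes agree with the statement. The edge case $k=1$ ($\ket{\psi_1}=\ket{0}$) is checked directly: $\hat q\ket 0=\frac{1}{\sqrt2}\ket1$, and $H_1(\lambda)=2\lambda$, $H_0=1$.

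The main obstacle is bookkeeping, specifically the sign of the $\ket{k-1}$ correction. My computation gives a minus sign there: the truncation removes a contribution that the full eigenvector would have supplied, so the correction must be subtracted. The $k=1$ check confirms this: $\frac{1}{\sqrt 2}\ket1 = \lambda\ket0 - \lambda\ket0 + \frac{1}{\sqrt2}\ket1$. I would therefore prove the identity with $-\frac{H_k(\lambda)}{2\sqrt{2^{k-1}(k-1)!}}\ket{k-1}$ and flag the $+$ in the displayed statement as a sign typo. This affects nothing of substance downstream, since the point is that the residual is supported on $\ket{k-1}$ and $\ket{k}$, with the $\ket{k-1}$ part vanishing exactly when $H_k(\lambda)=0$.
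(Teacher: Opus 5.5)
Your proof is correct and follows essentially the same route as the paper: apply $\hat q$ termwise to the finite sum, reindex, and use the three-term Hermite recurrence so that only the boundary coefficients at $\ket{k-1}$ and $\ket{k}$ survive. You are also right about the sign. The paper's own last step rewrites $(k-1)H_{k-2}(\lambda)=\lambda H_{k-1}(\lambda)-\tfrac12 H_k(\lambda)$, which produces $-\frac{H_k(\lambda)}{2\sqrt{2^{k-1}(k-1)!}}\ket{k-1}$, so the $+$ in the statement is a typo; it is harmless downstream, where $H_k(\lambda)=0$.
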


\begin{proof}
    Using \Cref{eq:q_on_0,eq:q_on_n}, we can write that 
    \begin{equation*}
        \hat{q}\left(\sum_{n = 0}^{k-1}\frac{1}{ \sqrt{2^nn!}}H_n(\lambda) \ket{n}\right) = \ket{1} + \sum_{n = 1}^{k-1}\frac{1}{ \sqrt{2^nn!}}H_n(\lambda) \left(\sqrt{\frac{n}{2}}\ket{n-1}+\sqrt{\frac{n+1}{2}}\ket{n+1}\right).
    \end{equation*}
    We split the summation, shift the index forward and backward as needed to get: 
    \begin{align*}
        \hat{q}\left(\sum_{n = 0}^{k-1}\frac{1}{ \sqrt{2^nn!}}H_n(\lambda) \ket{n}\right)& = \ket{1} + \sum_{n = 0}^{k-2}\frac{1}{ \sqrt{2^nn!}}\frac{H_n(\lambda)}{2} \ket{n}\\
        & \quad\quad\,\,\, + \sum_{n = 2}^{k}\frac{1}{ \sqrt{2^nn!}}nH_{n-1}(\lambda) \ket{n}.
    \end{align*}
    We use the recurrence formula characterizing the Hermite polynomial to get:  
    \begin{align*}
        \hat{q}\left(\sum_{n = 0}^{k-1}\frac{1}{ \sqrt{2^nn!}}H_n(\lambda) \ket{n}\right) & = \lambda \sum_{n = 0}^{k-2}\frac{1}{ \sqrt{2^nn!}}H_n(\lambda) \ket{n}+ \sum_{n = k-1}^{k}\frac{n}{ \sqrt{2^{n}n!}}H_{n-1}(\lambda) \ket{n} \\
        & = \lambda \sum_{n = 0}^{k-1}\frac{1}{ \sqrt{2^nn!}}H_n(\lambda) \ket{n}+ \frac{H_k(\lambda)}{2\sqrt{2^{k-1}(k-1)!}}\ket{k-1} +\frac{k}{ \sqrt{2^{k}k!}}H_{k-1}(\lambda) \ket{k}.
    \end{align*}
    The last equation is obtained using the recurrence formula that expresses $H_{k-2}$ in terms of $H_{k-1}$ and $H_{k}$.   
\end{proof}


As the above lemma determines the action of the position operator, using the fact that the momentum and position operators are Fourier-conjugate, we can express the action of momentum on the partial sum of a generalized eigenvector of the momentum operator. We get the following equation:  
\begin{equation*}
        \hat{p}\left(\sum_{n = 0}^{k-1}\frac{i^n}{ \sqrt{2^nn!}}H_n(\lambda) \ket{n}\right) = \lambda \sum_{n = 0}^{k-1}\frac{i^n}{ \sqrt{2^nn!}}H_n(\lambda) \ket{n}+ \frac{i^{k-1}H_k(\lambda)}{2\sqrt{2^{k-1}(k-1)!}}\ket{k-1} + \frac{i^kk}{ \sqrt{2^{k}k!}}H_{k-1}(\lambda) \ket{k}.
    \end{equation*}


We have shown that the action of the position and momentum operators on the partial sum of their generalized eigenvectors hints at an analysis of compressed versions of these operators to particular finite-dimensional subspaces defined by the Fock basis. We turn to this in the next section.

\section{Compressed Operators and Roots of Hermite Polynomials}

\subsection{Compressed Position and Momentum Operators}

We consider the following family of finite-dimensional subspaces $\{\mathcal V_k\}_{k\geq 0}$, where each $\mathcal V_k$ is the $k$-dimensional space generated by the first $k$-Fock basis states; that is,
\begin{equation*}
    \mathcal V_k = \left\{\alpha_0\ket{0}+\dots +\alpha_{k-1}\ket{k-1}:(\alpha_0, \dots, \alpha_{k-1})\in \CC^k\right\}.
\end{equation*}
These spaces are finite-dimensional truncations of the Fock space. Physically, this means that for a general system consisting of arbitrary particles, one approximates the state space as the superposition of states with the number of particles not larger than some positive integer. The norm-closure of the increasing union $\cup_{k\geq 0} \mathcal V_k$ is the entire Hilbert space.

Let $P_k$, for $k\geq 0$, be the orthogonal projection of the Fock space onto $\mathcal V_k$. Then on basis states, we have  
\begin{equation*}
    P_k\ket{n} = \begin{cases}
        \ket{n}& \mbox{ if } n\leq k-1,\\
        0& \mbox{ if } n>k-1.
    \end{cases}
\end{equation*}
Note that the sequence of projections $P_k$ converge as $k\rightarrow \infty$ to the (infinite rank) identity operator $I$ in the {\it strong operator topology}, which is described by pointwise convergence; that is, for all $\ket{x}\in\ell^2$, 
\[
  \|\ket{x} - P_k\ket{x}\|_2^2  = \left\|\sum_{n\geq k}x_n\ket{n}\right\|_2^2 = \sum_{n\geq k}|x_n|^2\stackrel[k\to \infty]{}{\longrightarrow} 0 . 
\]
This happens even though each $P_k$ is finite rank and the projection $P_k^\perp$ onto its orthogonal complement $\mathcal V_k^\perp$ is always infinite rank. 

More generally, for any operator $A$ acting on $\ell_2$, the sequence of compressed operators $A_k = P_kA P_k$ converges similarly to $A$.

\begin{Def}
For every operator $A$ acting on the space $\ell_2$, we define $A_k$ to be its compression to the $k$-dimensional space $\mathcal V_k$; that is,  
\begin{equation*}
    A_k   = P_kA P_k.\label{eq:compression}
\end{equation*}
\end{Def}

The compression $A_k$ that acts on the extended space $\ell_2$ is finite-rank as its image lies in the $k$-dimensional subspace $\mathcal V_k$. Moreover, in general, even if $A$ is an unbounded operator, $AP_k$ is bounded, so that $A_k$ is a continuous operator. The restriction of $A_k$ to the space $\mathcal V_k$ defines an endomorphism on $\mathcal V_k$, and can be identified with a square matrix through standard matrix representation theory \cite{moretti2017spectral}.

From this perspective, the `$k$-compressed' versions of the annihilation and creation operators $a_k$ and $a^\dag_k$ satisfy
\begin{equation*}
    a_k\ket{n} = \begin{cases}
        \sqrt{n}\ket{n-1} & \mbox{ if } n\leq k-1 \\
        0& \mbox{ if } n>k-1,
    \end{cases}
\end{equation*}
and 
\begin{equation*}
    a^\dagger_k\ket{n} = \begin{cases}
        \sqrt{n+1}\ket{n+1} & \mbox{ if } n\leq k-2 \\
        0& \mbox{ if } n>k-2.
    \end{cases}
\end{equation*}
Following directly from these identities and the general relations between $a$, $a^\dagger$ and $\hat{q}$, $\hat{p}$, we get the actions of the $k$-compressed position and momentum operators as follows (here using the convention $\ket{-1}=0$): 
\begin{align*}
    \hat{q}_k\ket{n} &= \begin{cases}
        \frac{1}{\sqrt{2}}\sqrt{n}\ket{n-1}+\frac{1}{\sqrt{2}}\sqrt{n+1}\ket{n+1}&\mbox{ if } n\leq k-2 \\
        \frac{1}{\sqrt{2}}\sqrt{n}\ket{n-1}&\mbox{ if } n =  k-1 \\
        0& \mbox{ if } n>k-1 
    \end{cases}\\
    \hat{p}_k\ket{n} &= \begin{cases}
        -\frac{i}{\sqrt{2}}\sqrt{n}\ket{n-1}+\frac{i}{\sqrt{2}}\sqrt{n+1}\ket{n+1}&\mbox{ if } n\leq k-2 \\
        -\frac{i}{\sqrt{2}}\sqrt{n}\ket{n-1}&\mbox{ if } n =  k-1 \\
        0& \mbox{ if } n>k-1
    \end{cases}
\end{align*}
From these equations, one can write the action of the compressed position and compressed momentum on any element $\ket{x}= \sum_{n\geq 0} x_n \ket{n}$ of $\ell_2$ as follows: 
\begin{equation}
    \hat{q}_k \ket{x} =\frac{x_1}{\sqrt{2}}\ket{0} +\sum_{n=1}^{k-2} \frac{1}{\sqrt{2}}\left(x_{n+1}\sqrt{n+1}+x_{n-1}\sqrt{n}\right)\ket{n}+\frac{x_{k-2}}{\sqrt{2}}\sqrt{k-1}\ket{k-1}\label{prop:eq2}
\end{equation}
\begin{equation}
         \hat{p}_k \ket{x} =-\frac{ix_1}{\sqrt{2}}\ket{0} -\sum_{n=1}^{k-2} \frac{i}{\sqrt{2}}\left(x_{n+1}\sqrt{n+1}-x_{n-1}\sqrt{n}\right)\ket{n}+i\frac{x_{k-2}}{\sqrt{2}}\sqrt{k-1}\ket{k-1}.\label{prop:eq3}
\end{equation}

The commutator relations between the compressed operators are then found by direct calculation in the next result.  

\begin{Prop}\label{compressedcommutator}
    For $k \geq 1$, the commutation relationship between the $k$-compressed position and momentum operators satisfies:  
\begin{equation}\label{eq:commutation_qkpk_simply}
     [\hat{q}_k, \hat{p}_k] = i P_{k-1} -i(k-1)\kb{k-1}{k-1}.
\end{equation}
Equivalently, we can write this relation in terms of the full position and momentum operators as follows, 
\begin{equation}\label{eq:commutation_qkpk}
    [\hat{q}_k, \hat{p}_k] = [\hat{q}, \hat{p}] -ik\kb{k-1}{k-1}-i P_k^\perp . 
\end{equation} 
\end{Prop}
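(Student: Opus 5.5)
The plan is to reduce everything to the compressed ladder operators $a_k$ and $a_k^\dagger$, whose actions on the Fock basis were written down just before the statement. Since $P_k$ is linear and $\hat q,\hat p$ are linear combinations of $\hat a,\hat a^\dagger$ by \eqref{op:posit_fct_ac}--\eqref{op:moment_fct_ac}, we have
\[
\hat q_k=\tfrac{1}{\sqrt2}(a_k+a_k^\dagger), \qquad \hat p_k=\tfrac{i}{\sqrt2}(a_k^\dagger-a_k).
\]
Expanding the commutator bilinearly, the terms $[a_k,a_k]$ and $[a_k^\dagger,a_k^\dagger]$ vanish. The two cross terms combine, giving
\[
[\hat q_k,\hat p_k]=\tfrac{i}{2}\bigl([a_k,a_k^\dagger]-[a_k^\dagger,a_k]\bigr)=i\,[a_k,a_k^\dagger].
\]
So it suffices to compute $[a_k,a_k^\dagger]$ on each basis vector $\ket{n}$.

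Next I would check by cases that $[a_k,a_k^\dagger]$ is diagonal in the Fock basis, reading off its entries from the displayed actions of $a_k$ and $a_k^\dagger$.
\begin{itemize}
\item For $n\le k-2$: $a_k a_k^\dagger\ket{n}=(n+1)\ket{n}$ and $a_k^\dagger a_k\ket{n}=n\ket{n}$. The case $n=0$ is covered by the convention $\ket{-1}=0$. The diagonal entry is therefore $1$.
\item For $n=k-1$: $a_k^\dagger\ket{k-1}=0$, so the first product vanishes. Meanwhile $a_k\ket{k-1}=\sqrt{k-1}\ket{k-2}$, and since $k-2\le k-2$, applying $a_k^\dagger$ returns $(k-1)\ket{k-1}$. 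The diagonal entry is $-(k-1)$.
\item For $n\ge k$: both products vanish because $P_k\ket{n}=0$.
\end{itemize}
Assembling these gives $[a_k,a_k^\dagger]=P_{k-1}-(k-1)\kb{k-1}{k-1}$. Multiplying by $i$ yields \eqref{eq:commutation_qkpk_simply}. The degenerate case $k=1$ fits the same pattern, since $P_0=0$ and the coefficient $k-1$ vanishes, so both sides are zero.

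For the equivalent form \eqref{eq:commutation_qkpk}, I would substitute two identities into \eqref{eq:commutation_qkpk_simply}: $P_{k-1}=P_k-\kb{k-1}{k-1}$ and $P_k=I-P_k^\perp$. Using $[\hat q,\hat p]=iI$, this rearranges to $iI-iP_k^\perp-ik\kb{k-1}{k-1}$, as claimed.

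The only delicate point is the boundary bookkeeping at $n=k-1$ (and at $n=0$), where the truncation breaks the usual pattern. One should also note that $[\hat q,\hat p]=iI$ holds on the dense span of the Fock vectors, which is where the second identity is meant. Otherwise the argument is a routine direct computation.
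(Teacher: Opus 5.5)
Your proof is correct. It differs from the paper's mainly in where the case analysis is carried out.

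The paper works directly with $\hat q_k$ and $\hat p_k$. It asserts that $[\hat q_k,\hat p_k]\ket{n}=[\hat q,\hat p]\ket{n}$ for $n\le k-2$ and that the commutator vanishes for $n\ge k$. It then expands $\hat q_k\hat p_k\ket{k-1}$ and $\hat p_k\hat q_k\ket{k-1}$ explicitly: the $\ket{k-3}$ terms cancel and the $\ket{k-1}$ terms combine to $-i(k-1)\ket{k-1}$.

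You instead use linearity of $A\mapsto P_kAP_k$ and bilinearity of the commutator to reduce to $[\hat q_k,\hat p_k]=i[a_k,a_k^\dagger]$. After that, every case is a one-line diagonal computation, $(n+1)-n$ or $0-(k-1)$, with no cross terms to cancel.

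A side benefit is that your route handles $n=k-2$ cleanly. There, $\hat q_k\hat p_k\ket{k-2}$ and $\hat q\hat p\ket{k-2}$ genuinely differ by $\tfrac{i}{2}\sqrt{k(k-1)}\,\ket{k}$. The paper's claim of agreement for $n\le k-2$ therefore relies on these $\ket{k}$-components cancelling in the commutator, a point it leaves implicit. In your computation, $a_k a_k^\dagger\ket{k-2}=(k-1)\ket{k-2}$ needs no such remark. You also check the degenerate case $k=1$ explicitly.

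The paper's version has the mild advantage of staying with the tridiagonal matrices of $\hat q_k,\hat p_k$ used in the next theorem. The passage to the second form of the identity is the same rearrangement in both.
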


\begin{proof}
We have $[\hat{q}_k,\hat{p}_k]\ket{n} =[\hat{q}, \hat{p}]\ket{n}$ for all $n\leq k-2$, and $[\hat{q}_k,\hat{p}_k]\ket{n} = 0$ for all $n\geq k$. Whereas, on the $(k-1)$-th Fock basis state, we have 
    \begin{align*}
        [q_k,p_k]\ket{k-1}& =(\hat{q}_k\hat{p}_k - \hat{p}_k\hat{q}_k)\ket{k-1},\nonumber\\
         &= -i\frac{\sqrt{k-1}}{\sqrt{2}}\hat{q}_k\ket{k-2}  +  \frac{\sqrt{k-1}}{\sqrt{2}}\hat{p}_k\ket{k-2}, \nonumber\\
        & = -i\frac{\sqrt{k-1}}{\sqrt{2}}\left(\frac{\sqrt{k-2}}{\sqrt{2}}\ket{k-3} - \frac{\sqrt{k-1}}{\sqrt{2}}\ket{k-1}\right)  +  i\frac{\sqrt{k-1}}{\sqrt{2}}\left(\frac{\sqrt{k-2}}{\sqrt{2}}\ket{k-3} - \frac{\sqrt{k-1}}{\sqrt{2}}\ket{k-1}\right), \nonumber\\
        & = -i(k-1)\ket{k-1}.
    \end{align*}

   This yields \Cref{eq:commutation_qkpk_simply}. To obtain \Cref{eq:commutation_qkpk}, we simply use the commutator $ [\hat{q},\hat{p}] = iI$ of $\hat{q}$ and $\hat{p}$ and rearrange the equation accordingly.
\end{proof}

\subsection{Diagonalization of the Compressed Operators with Hermite Polynomial Roots}

The position and momentum operators are self-adjoint unbounded operators on $\ell_2$. However, their $k$-compressed versions are self-adjoint and finite-rank on $\ell_2$. Hence the spectral theorem applies and these compressed versions are diagonalizable. We identify the elements of their spectra in the following result.

\begin{Thm}\label{mainlem}
    For all $k>1$, every non-zero eigenvalue of the $k$-compressed position $\hat{q}_k$ and momentum $\hat{p}_k$ operators is a root of the $k$-th Hermite polynomial $H_k(x)$. 
\end{Thm}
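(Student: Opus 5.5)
Since $\hat q_k$ maps $\mathcal V_k^\perp$ to zero and leaves $\mathcal V_k$ invariant, any eigenvector for a non-zero eigenvalue $\lambda$ lies in $\mathcal V_k$. So the plan is to work with an eigenvector $\ket{x}=\sum_{n=0}^{k-1}x_n\ket{n}$ with $\hat q_k\ket{x}=\lambda\ket{x}$, $\lambda\neq 0$, and read off the eigenvalue equation coefficientwise from \Cref{prop:eq2}. Concretely, we will show that $\ket{x}$ must be a scalar multiple of the truncated formal eigenvector $\sum_{n=0}^{k-1}\frac{1}{\sqrt{2^nn!}}H_n(\lambda)\ket{n}$, and that the eigenvalue equation then forces $H_k(\lambda)=0$.

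\textbf{Step 1: the coefficients are Hermite values.} The equations for the coefficients of $\ket{0},\dots,\ket{k-2}$ are exactly the recursion from Section~3. Running the same induction gives $x_n=\frac{1}{\sqrt{2^nn!}}H_n(\lambda)x_0$ for $0\le n\le k-1$. If $x_0=0$, the recursion gives $x_n=0$ for all $n$, so $x_0\neq 0$; normalize $x_0=1$.

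\textbf{Step 2: the boundary equation.} With these coefficients, \Cref{lem:q_on_partial_lambda_k} computes $\hat q$ on the partial sum. Compressing by $P_k$ kills the $\ket{k}$ term, so
\[
\hat q_k\ket{x}=\lambda\ket{x}+\frac{H_k(\lambda)}{2\sqrt{2^{k-1}(k-1)!}}\ket{k-1}.
\]
Since $\ket{x}$ is an eigenvector, the extra term vanishes, hence $H_k(\lambda)=0$. (Equivalently, use the $\ket{k-1}$ coefficient $\sqrt{k-1}\,x_{k-2}=\sqrt2\lambda x_{k-1}$ together with the recurrence $H_k=2\lambda H_{k-1}-2(k-1)H_{k-2}$.) The subtle point is the claim in \Cref{lem:q_on_partial_lambda_k} itself. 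Its proof has a slip for $n=0$, so I would recheck the coefficient of $\ket{k-1}$ directly via this recurrence rather than rely on the lemma as written.

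\textbf{Step 3: momentum.} For $\hat p_k$ there are two routes. One is to repeat the argument with $x_n=\frac{i^n}{\sqrt{2^nn!}}H_n(\lambda)$ and the analogous momentum partial-sum identity. The cleaner route is to note that $F$ commutes with $P_k$ because it is diagonal in the Fock basis, and that $F\hat qF^{*}$ equals $\pm\hat p$. Hence $\hat p_k$ is unitarily equivalent to $\pm\hat q_k$. Since the roots of $H_k$ are symmetric under $x\mapsto -x$ (as $H_k$ has parity $(-1)^k$), the spectrum conclusion transfers to $\hat p_k$.

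The main obstacle is only the bookkeeping of the boundary term at $\ket{k-1}$ and the sign conventions in the momentum case. Parity of $H_k$ removes any sign ambiguity.
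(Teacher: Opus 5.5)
Your proof is correct, but it takes a different route from the paper's proof of this theorem.

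\textbf{The paper's route.} It works with the characteristic polynomial of the $k\times k$ tridiagonal block. Expanding the determinant along the last column gives $g_k(\lambda)=\lambda g_{k-1}(\lambda)-(k-1)g_{k-2}(\lambda)$, which is the probabilist Hermite recurrence. Hence $\det\bigl((\hat q_k-\lambda I)|_{\mathcal V_k}\bigr)=2^{-k}H_k(-\lambda)$. Momentum is dismissed with ``similarly''; this works because the products of paired off-diagonal entries are again $n/2$.

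\textbf{Your route.} You solve the eigenvalue equation row by row. The first $k-1$ rows force the eigenvector to be $x_0P_k\ket{\lambda}_q$, and the last row is exactly $H_k(\lambda)=0$. This is the computation the paper uses in Corollary~\ref{Thm:eigenvectors} for the converse direction, run backwards.

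\textbf{What each buys.} The determinant identity gives the full characteristic polynomial in one stroke, hence both inclusions with multiplicities. Your argument can be read as: for every $\lambda$, the first $k-1$ equations have a one-dimensional solution space, and the last holds if and only if $H_k(\lambda)=0$. Read that way, it gives the same equivalence with no determinant bookkeeping. It also exhibits the eigenvectors and shows every eigenspace in $\mathcal V_k$ is one-dimensional, so by self-adjointness the eigenvalues are simple.

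Your Fourier treatment of momentum is also cleaner than the paper's ``similarly''. $F$ is diagonal, so it commutes with $P_k$. With the paper's conventions one checks on Fock states that $F^*\hat qF=\hat p$ (equivalently $F\hat qF^*=-\hat p$). So $\hat p_k=F^*\hat q_kF$ is unitarily equivalent to $\hat q_k$ itself, and the parity of $H_k$ is not even needed.

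One small correction: the display in your Step 2 inherits a sign error from \Cref{lem:q_on_partial_lambda_k}. The $\ket{k-1}$ coefficient of $\hat q_k\ket{x}-\lambda\ket{x}$ is
$\bigl((k-1)H_{k-2}(\lambda)-\lambda H_{k-1}(\lambda)\bigr)/\sqrt{2^{k-1}(k-1)!}=-H_k(\lambda)/\bigl(2\sqrt{2^{k-1}(k-1)!}\bigr)$.
As a check, for $k=1$ the lemma as stated would put a spurious $2\lambda\ket{0}$ into $\hat q\ket{0}=\frac{1}{\sqrt2}\ket{1}$. Since you only need this term to vanish, your conclusion is unaffected, and your parenthetical direct computation via the recurrence is the one to rely on.
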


We shall first show this directly and then we will indicate below how the result can also be derived from a result in the theory of orthogonal polynomials. 

\begin{proof}
For all $k>1$, we can use the equations above and consider the matrix representations for the compressed operators in the (ordered) orthonormal basis $\{\ket{n}\}_{n\geq 0}$. In matrix form, the relevant operator eigenvalue equation gives (with all unwritten entries equal to zero),  
\begin{equation*}
    \hat{q}_k-\lambda I =\left[
    \begin{array}{rrrcl|rccc}
-\lambda  & \frac{1}{\sqrt{2}}            &          &              &            & & & \\
        \frac{1}{\sqrt{2}} & -\lambda     & \frac{1}{\sqrt{2}}\sqrt{2} &              &            & & & \\
          & \frac{1}{\sqrt{2}}\sqrt{2}     & -\lambda & \ddots       &            & & & \\
          &              & \ddots   & \ddots       &\frac{1}{\sqrt{2}}\sqrt{k-1}  & & & \\
          &              &          &  \frac{1}{\sqrt{2}}\sqrt{k-1}  & -\lambda   & & & \\
\hline
          &             &           &              &            &-\lambda & 0 & \\
          &             &           &              &            & 0 & -\lambda  & \ddots \\
          &             &           &              &            & & \ddots &\ddots   \\
    \end{array}\right].
\end{equation*}

In particular, a non-zero eigenvalue of this matrix must be an eigenvalue of the left upper block matrix. To find these eigenvalues, we examine the roots of the characteristic polynomial of that matrix, which we will denote by $q_k(\lambda) = | (\hat{q}_k - \lambda I)|_{\mathcal V_k} |$. 

For the base cases, $k=2,3$, we have $q_2(\lambda) = \lambda^2 - \frac12$ and $q_3(\lambda) = -\lambda^3 + \frac32\lambda$, and so $q_k(\lambda) =0$ if and only if $H_k(\lambda) =0$, as $H_2(x) = 4 x^2 - 2$ and $H_3(x) = 8x^3 -12x$. Note also in these cases that $H_k(x) = 2^k q_k(-x)$. 

For $k>3$, note that $q_k(\lambda) = \sqrt{2}^{-k} g_k(-\sqrt{2}{\lambda})$ where, 
\begin{align*}
g_k(\lambda)= & \left|
    \begin{array}{rrrcl}
\lambda  &      1        &          &              &            \\
      1    & \lambda     &  \sqrt{2} &              &            \\
          &  \sqrt{2}     & \lambda & \ddots       &            \\
          &              & \ddots   & \ddots       & \sqrt{k-1}  \\
          &              &          &   \sqrt{k-1}  & \lambda   
     \end{array}\right|  \\  = & {\lambda \left|
    \begin{array}{rrrcl}
\lambda  &      1        &          &              &            \\
     1     & \lambda     &  \sqrt{2} &              &            \\
          &  \sqrt{2}     & \lambda & \ddots       &            \\
          &              & \ddots   & \ddots       & \sqrt{k-2}  \\
          &              &          &   \sqrt{k-2}  & \lambda  
     \end{array}\right|}
     - \sqrt{k-1} \left|
    \begin{array}{rrrccl}
\lambda  &      1        &          &             &           &            \\
    1      & \lambda     &  \sqrt{2} &            &           &            \\
          & \sqrt{2}     & \lambda & \ddots       &           &            \\
          &              & \ddots   & \ddots      &  \sqrt{k-3} &          \\
          &              &          &  \sqrt{k-3} & \lambda  &  \sqrt{k-2} \\
          &              &          &             &          &  \sqrt{k-1}   
     \end{array}\right|.
\end{align*}
In the determinant calculation above, we have expanded along the final column. Therefore, we evidently have $g_k(\lambda) =  \lambda g_{k-1}(\lambda) -(k-1) g_{k-2}(\lambda)$, which is precisely the recurrence formula for the Probabilist Hermite polynomials. 
Recalling the relation $H_k(\lambda) = \sqrt{2}^{k}h_k(\sqrt{2}\lambda)$, we have $H_k(\lambda) = \sqrt{2}^{k}g_k(\sqrt{2}\lambda)$, and so $H_k(\lambda) = 2^kq_k(-\lambda)$ for all $k$. 

Given that the eigenvalues we desire are the zeros of $q_k$, it follows that these eigenvalues are the symmetric opposites on the real line of the zeros of the Hermite polynomials, which are symmetric themselves. Hence, these eigenvalues are exactly the zeros of the Hermite polynomial $H_k$. 

Similarly, for the momentum operators $\hat{p}_k$, we can define the polynomials $g_k(\lambda)$ as above, and we obtain the relation $g_k(\lambda) =  \lambda g_{k-1}(\lambda) -(k-1) g_{k-2}(\lambda)$, exactly as above.   
\end{proof}

\begin{Rm}
This result has appeared in the literature, for instance see \cite{trzetrzelewski2003quantum}.
In fact, we have found the underlying mathematical basis of this result is a special case of a general result that every $n$th degree polynomial in a sequence of orthogonal polynomials is the characteristic polynomial of a certain self-adjoint tridiagonal matrix. The following result can be found in \cite[pg. 77]{borwein2012polynomials} and \cite[pg. 30]{chihara2011introduction}. One can see a special case of the tridiagonal matrix form in the result above.  
\end{Rm} 

\begin{Thm}
Let  $\{a_n\}_{n=0}^{\infty}$ be a sequence of real numbers and  $\{b_n\}_{n=0}^{\infty}$ be a sequence of positive numbers, and let $\{f_n\}_{n=0}^{\infty}$ be the sequence of polynomials generated by the three term recurrence $f_{n+1}(x)=(x-a_n)f_n(x)-b_n f_{n-1}(x)$ with $f_{0}=1$ and $f_1=(x-a_0)$.  Then for all $k\in \mathbb{N}$, $f_k$ is the characteristic polynomial of the $k \times k$ self-adjoint tridiagonal matrix (with unwritten entries equal to zero):
$$\left[
    \begin{array}{rrrccl}
  a_0 &      \sqrt{b_1}       &          &             &           &            \\
    \sqrt{b_1}     & a_1    & \sqrt{b_2} &            &           &            \\
          & \sqrt{b_2}     & a_2 & \ddots       &           &            \\
    &              & \ddots   & \ddots       & \sqrt{b_{k-1}}  \\
          &              &          &   \sqrt{b_{k-1}}  & a_{k-1}
            
     \end{array}\right].
     $$
\end{Thm}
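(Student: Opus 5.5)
We prove the statement by induction on $k$, using cofactor expansion of the determinant, in the same way the paper handled $g_k$ in the proof of \Cref{mainlem}. Write $J_k$ for the displayed $k\times k$ tridiagonal matrix and set $D_k(x)=\det(xI_k-J_k)$, the (monic) characteristic polynomial. The aim is to show that $D_k$ satisfies the same three-term recurrence and the same initial conditions as $f_k$. Uniqueness of the sequence generated by the recurrence then gives $D_k=f_k$ for all $k$.

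\textbf{Base cases.} For $k=1$ we have $J_1=[a_0]$, so $D_1(x)=x-a_0=f_1$. It is convenient to set $D_0=1=f_0$ as the empty determinant. As a sanity check for $k=2$, direct computation gives $D_2(x)=(x-a_0)(x-a_1)-b_1$, which is exactly $f_2$ from the recurrence with $n=1$.

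\textbf{Inductive step.} Fix $k\ge 2$ and expand $\det(xI_k-J_k)$ along its last column. That column has only two nonzero entries: the diagonal entry $x-a_{k-1}$ and the off-diagonal entry $-\sqrt{b_{k-1}}$ in row $k-1$.
\begin{itemize}
\item The cofactor of the diagonal entry is $D_{k-1}(x)$.
\item The minor of the off-diagonal entry is obtained by deleting row $k-1$ and column $k$. Its last row is $(0,\dots,0,-\sqrt{b_{k-1}})$. Expanding along that row leaves $-\sqrt{b_{k-1}}\,D_{k-2}(x)$.
\end{itemize}
The cofactor sign of position $(k-1,k)$ is $(-1)^{2k-1}=-1$. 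Combining the two terms gives
\[
D_k(x)=(x-a_{k-1})D_{k-1}(x)-\bigl(-\sqrt{b_{k-1}}\bigr)\bigl(-\sqrt{b_{k-1}}\bigr)D_{k-2}(x)=(x-a_{k-1})D_{k-1}(x)-b_{k-1}D_{k-2}(x).
\]
This is the recurrence $f_{n+1}=(x-a_n)f_n-b_nf_{n-1}$ with $n=k-1$.

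The main point to check carefully is the sign bookkeeping in this second cofactor. In particular, the two factors $-\sqrt{b_{k-1}}$ must multiply to $+b_{k-1}$, so that together with the cofactor sign the coefficient comes out as $-b_{k-1}$. Here positivity of $b_n$ ensures $\sqrt{b_n}$ is real, so $J_k$ is a real symmetric, hence self-adjoint, matrix. The only other care needed is to treat $k=2$ consistently, which the convention $D_0=1$ handles. Induction then gives $D_k=f_k$ for all $k\in\mathbb N$.

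Specializing to $a_n=0$ and $b_n=n$ recovers the recurrence $h_{n+1}=xh_n-nh_{n-1}$ for the Probabilist Hermite polynomials. The corresponding $J_k$ is $\sqrt2$ times the matrix of $\hat q_k$ restricted to $\mathcal V_k$, which connects this result back to \Cref{mainlem}.
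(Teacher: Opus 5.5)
Your proof is correct and takes essentially the paper's approach: the paper states this theorem only by citation, but for the Hermite special case in the proof of \Cref{mainlem} it uses the same idea, expanding the determinant along the final column to obtain the three-term recurrence. You carry this out in general with the cofactor signs handled correctly, giving $D_k=(x-a_{k-1})D_{k-1}-b_{k-1}D_{k-2}$ with $D_0=1$ and $D_1=x-a_0$, and you correctly take the monic convention $\det(xI_k-J_k)$ that the statement requires.
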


We can relate the description of eigenvalues in the spectral decomposition of the compressed operators above to the formal description of the full position and momentum operators. 
In particular, the next result expresses the eigenvectors of the compressed position operators and the uncompressed position operator when these two are associated with the same eigenvalue, namely zeros of Hermite polynomials. 
Using standard conventions from quantum mechanics, we denote the eigenvectors associated to an element $\lambda$ in the spectrum of $\hat{q}_k$, $\hat{p}_k$ by $\ket{\lambda}_{q_k}$ and $\ket{\lambda}_{p_k}$, and so we have the diagonal forms   
\[
        \hat{q}_k = \sum_{j = 0}^{k-1} \lambda_{k,j}\ket{\lambda_{k,j}}_{q_k}\bra{\lambda_{k,j}}_{q_k}
          \quad \mathrm{and} \quad 
        \hat{p}_k = \sum_{j = 0}^{k-1} \lambda_{k,j}\ket{\lambda_{k,j}}_{p_k}\bra{\lambda_{k,j}}_{p_k}, 
\]
    where $\{\lambda_{k,j}\}_{j=0}^{k-1}$ are the $k$ roots of the Hermite polynomial of  degree $k$, and the lists of vectors $\ket{\lambda_{k,0}}_{q_k}$,$\ket{\lambda_{k,1}}_{q_k}$, $\dots$, $\ket{\lambda_{k,k-1}}_{q_k}$ and  $\ket{\lambda_{k,0}}_{p_k}$,$\ket{\lambda_{k,1}}_{p_k}$, $\dots$, $\ket{\lambda_{k,k-1}}_{p_k}$ each form an orthonormal basis of $\mathcal V_k$.

We note in the statement of the following result that although each projection $P_k$ is defined on the Hilbert space $l^2$, and that $\ket{\lambda}_q$ is not an element of the Hilbert space, we can extend the definition of  $P_k$ to the space $s^\times$ containing $\ket{\lambda}_q$  by continuity. 

\begin{Coro} \label{Thm:eigenvectors}
The compressed position operator $\hat{q}_k$ and the compressed momentum operator $\hat{p}_k$ are diagonalizable with eigenvalues $\{\lambda_{k,j}\}_{j=0}^{k-1}$ zeros of the Hermite polynomial of degree $k$ and the associated eigenvectors satisfy 
\begin{align}
        \ket{\lambda_{k,j}}_{q_k} &=  P_{k}\ket{\lambda_{k,j}}_{q} \label{eq:eigenvector_qk_in_k-1}\\
        \ket{\lambda_{k,j}}_{p_k} &=  P_{k}\ket{\lambda_{k,j}}_{p} \label{eq:eigenvector_pk_in_k-1}
    \end{align}
    where $\ket{\lambda_{k,j}}_{q}$ and $\ket{\lambda_{k,j}}_{p}$ are respectively the generalized eigenvectors of the uncompressed position and momentum operators for $\lambda_{k,j}$. Moreover, the Fourier transform relates the eigenvectors by, 
    \begin{equation}
        \ket{\lambda_{k,j}}_{p_k} = F^*\ket{\lambda_{k,j}}_{q_k}.\label{eq:pk_qk}
    \end{equation}
\end{Coro}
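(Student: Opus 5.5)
The natural route is to test the truncated generalized eigenvectors directly, using \Cref{lem:q_on_partial_lambda_k}. Fix $k>1$ and a root $\lambda=\lambda_{k,j}$ of $H_k$. By the definition of $\ket{\lambda}_q$ and the extension of $P_k$ to $s^\times$ by continuity,
\[
P_k\ket{\lambda}_q=\sum_{n=0}^{k-1}\frac{1}{\sqrt{2^nn!}}H_n(\lambda)\ket{n},
\]
which lies in $\mathcal V_k$. It is nonzero because its $\ket{0}$-coefficient is $H_0=1$.

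Since $\hat q_k=P_k\hat q P_k$, apply $P_k$ to the identity of \Cref{lem:q_on_partial_lambda_k}. The term in $\ket{k}$ is annihilated by $P_k$. The term in $\ket{k-1}$ carries the factor $H_k(\lambda)=0$. What remains is $\hat q_k P_k\ket{\lambda}_q=\lambda P_k\ket{\lambda}_q$. So each root of $H_k$ is an eigenvalue of $\hat q_k|_{\mathcal V_k}$, with eigenvector $P_k\ket{\lambda}_q$.

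Next we check that these exhaust the spectrum and are simple. The roots of $H_k$ are $k$ distinct reals; this is a standard fact for orthogonal polynomials, and also follows since $H_k$ and $H_{k-1}$ have no common root by the recurrence. This gives $k$ eigenvectors for distinct eigenvalues of the self-adjoint $k\times k$ block, so they are orthogonal. They therefore form an orthogonal basis of $\mathcal V_k$, which is diagonalizability. This is consistent with \Cref{mainlem}.

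\textbf{Main caveat.} The identity \Cref{eq:eigenvector_qk_in_k-1} only holds up to normalization. The paper's $\ket{\lambda_{k,j}}_q$ is unnormalized, while the $\ket{\lambda_{k,j}}_{q_k}$ are meant to be orthonormal. I would read \Cref{eq:eigenvector_qk_in_k-1} as holding up to a positive scalar, or as fixing the (unnormalized) choice of eigenvector. Each eigenspace is one-dimensional, so the eigenvector is determined up to scalar anyway.

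For momentum, the same argument applies using the displayed analogue of \Cref{lem:q_on_partial_lambda_k} for $\hat p$. After applying $P_k$, the $\ket{k-1}$ term again vanishes because $H_k(\lambda)=0$. This gives $\hat p_k P_k\ket{\lambda}_p=\lambda P_k\ket{\lambda}_p$.

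Finally, for \Cref{eq:pk_qk}, compare coefficients. We have $F^*=\sum_n i^n\kb{n}{n}$, which is diagonal in the Fock basis and so commutes with $P_k$. Hence
\[
F^*P_k\ket{\lambda}_q=\sum_{n<k}\frac{i^nH_n(\lambda)}{\sqrt{2^nn!}}\ket{n}=P_k\ket{\lambda}_p.
\]
Since $F^*$ is unitary, normalizations are preserved, and \Cref{eq:pk_qk} follows with consistent normalizations on both sides.
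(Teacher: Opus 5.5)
Your proof is correct and follows the paper's own argument. You apply $P_k$ to the identity of \Cref{lem:q_on_partial_lambda_k} (and to its momentum analogue): the $\ket{k}$ term drops, and the $\ket{k-1}$ term vanishes because $H_k(\lambda)=0$; you then read off \Cref{eq:pk_qk} from $F^*=\sum_n i^n\kb{n}{n}$ commuting with $P_k$. Your additions are accurate and fill in points the paper leaves implicit: the distinctness of the roots, which gives a full eigenbasis, and the remark that \Cref{eq:eigenvector_qk_in_k-1} holds only up to normalization.
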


\begin{proof}
    From \Cref{lem:q_on_partial_lambda_k}, for any real number $\lambda$,  we have 
    \begin{equation}\label{eq:pi_k_p_partial_k}
         P_k \hat{q}\left(\sum_{n = 0}^{k-1}\frac{1}{ \sqrt{2^nn!}}H_n(\lambda) \ket{n}\right) = \lambda \sum_{n = 0}^{k-1}\frac{1}{ \sqrt{2^nn!}}H_n(\lambda) \ket{n}+ \frac{H_k(\lambda)}{2\sqrt{2^{k-1}(k-1)!}}\ket{k-1}.
    \end{equation}
    We denote
    \begin{equation*}
         P_k\ket{\lambda}_q = \sum_{n = 0}^{k-1}\frac{1}{ \sqrt{2^nn!}}H_n(\lambda) \ket{n}.
    \end{equation*}
    This allows us to rewrite \Cref{eq:pi_k_p_partial_k} as follows:
    \begin{equation*}
        \hat{q}_k P_k\ket{\lambda}_q = \lambda P_k\ket{\lambda}_q + \frac{H_k(\lambda)}{2\sqrt{2^{k-1}(k-1)!}}\ket{k-1}.
    \end{equation*}
    Now we pick $\lambda$ to be a zero of the $k$-th Hermite polynomial. This implies that $H_k(\lambda) =0$, and we get
    \begin{equation*}
        \hat{q}_k P_k\ket{\lambda}_q = \lambda P_k\ket{\lambda}_q .
    \end{equation*}
    Similarly, 
    \begin{equation*}
        \hat{p}_k P_k\ket{\lambda}_p = \lambda P_k\ket{\lambda}_p .
    \end{equation*}

We obtain \Cref{eq:pk_qk} directly from the definition of the Fourier transform, which acts as a change of basis matrix between the operators. 
 \end{proof}

We further note that the change of basis matrix between the bases $\{\ket{\lambda_{j',k}}_{q_k}\}_{j'}$ and $\{\ket{\lambda_{j,k}}_{p_k}\}_j$ has as entries $ {}_{q_k}\bk{\lambda_{k,j}}{\lambda_{k,j'}}_{p_k} =I_k(\lambda_{k,j}, \lambda_{k,j'})$, such that 
\begin{equation}\label{eq:changeofbasis}
    I_k(x,y) = \sum_{n =0}^{k-1}\frac{i^n}{2^nn!}H_n(x)H_n(y).
\end{equation}
The series of two-variable functions $I_k$ converges to the Melher kernel function.

\subsection{Density of the Roots}

An important result from the theory of orthogonal polynomials is that the set of zeros of the Hermite polynomials forms a dense subset of the real line. Indeed, to see this, it is known  \cite{dominici2010asymptotic} that for all nonnegative integers $n\geq 0$, the absolute value of any root of $H_n$ does not exceed $\sqrt{2n}$. In fact, this bound of $\sqrt{2n}$ can be derived by a simple application of Gershgorin's theorem to the tridiagonal matrix representation of $\hat{q}_n$ or $\hat{p}_n$ in the Fock basis. Hence, the zeros of $H_n$ with the greatest magnitude are of order $o(n)$. These facts are used as ingredients for Theorem 6.1.2 in \cite{szeg1939orthogonal}, which is given as follows.

\begin{Thm}[\cite{szeg1939orthogonal} ]
    Suppose $(f_n)$ is a set of orthogonal polynomials with weight function $W(x)$ supported on $\mathbb R$ such that the zero of greatest modulus of $f_n(x)$ is $o(n)$. For every closed interval $[a, b]$, such that $\int_{a}^{b}W(x)dx> 0$, there is $n_0$ such that for all $n>n_0$, every polynomial $f_n$ has at least one zero in $[a,b]$. 
\end{Thm}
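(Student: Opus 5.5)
We prove the statement by contradiction, using Gauss–Christoffel quadrature built from the zeros of $f_n$. Fix a closed interval $[a,b]$ with $\int_a^b W(x)\,dx>0$. Suppose the conclusion fails. Then there are infinitely many $n$ for which $f_n$ has no zero in $[a,b]$.

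\textbf{Quadrature facts.} These are standard in the orthogonal polynomial theory the paper cites. The zeros $x_{n,1}<\dots<x_{n,n}$ of $f_n$ are real and simple, since they are the eigenvalues of the self-adjoint tridiagonal Jacobi matrix from the preceding theorem. There are positive Christoffel weights $w_{n,j}$ such that
\[
\int_{\mathbb R} p(x)W(x)\,dx=\sum_{j=1}^n w_{n,j}\,p(x_{n,j})
\]
for every polynomial $p$ of degree at most $2n-1$. This follows by dividing $p$ by $f_n$ and using orthogonality of $f_n$ to polynomials of lower degree. Positivity comes from taking $p=\ell_j^2$, where $\ell_j$ is the Lagrange basis polynomial. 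Write $\mu_n$ for the discrete measure $\sum_j w_{n,j}\delta_{x_{n,j}}$.

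\textbf{Contradiction step.} For each bad $n$, choose a polynomial $p_n\ge 0$ of degree at most $2n-1$ that vanishes at every zero of $f_n$ outside $[a,b]$, so that $\sum_j w_{n,j}p_n(x_{n,j})=0$. Quadrature then forces $\int p_n W\,dx=0$. The case where $f_n$ has no zeros in $[a,b]$ and the zeros are spread on both sides needs care. The cleaner route is Chebyshev–Markov–Stieltjes separation: on any interval free of zeros of $f_n$, the mass $\int W$ is bounded by a single Christoffel weight at the neighbouring zeros. Concretely, if $x_{n,j}<a\le b<x_{n,j+1}$, the separation theorem gives
\[
\int_{x_{n,j}}^{x_{n,j+1}}W\,dx\le w_{n,j}+w_{n,j+1}\le 2\lambda_n^*,
\]
where $\lambda_n^*$ is the Christoffel function evaluated at the nearest zeros. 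It remains to show $\lambda_n^*\to 0$ uniformly on compacts, which contradicts $\int_a^b W>0$. If instead all zeros lie on one side of $[a,b]$, the same argument works with the unbounded gap, using the largest or smallest zero.

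\textbf{Main obstacle.} The key step is showing the Christoffel weights near $[a,b]$ tend to $0$, and this is where the hypothesis that the largest zero is $o(n)$ enters. I would first use the extremal characterization
\[
\lambda_n(x_0)=\min\Big\{\int q^2W\,dx:\ \deg q\le n-1,\ q(x_0)=1\Big\}.
\]
Let $M_n=o(n)$ bound the modulus of the zeros. Take $q(x)=T_m\!\big((x-x_0)/(2M_n)\big)$-type peaked polynomials, or $(1-(x-x_0)^2/(4M_n^2))^{m}$ with $m\approx n/2$. These equal $1$ at $x_0$ and are uniformly small away from a window of width about $M_n/\sqrt n$. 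The mass of $W$ captured in that window tends to $0$ when $M_n/\sqrt{n}\to0$; when $M_n$ is merely $o(n)$, a Chebyshev-polynomial peak of width about $M_n/n\to 0$ is needed instead. I expect tuning this test polynomial, and controlling the tails, where only the moment condition on $W$ is available, to be the delicate part. Once $\lambda_n\to0$ on $[a,b]$, the separation inequality contradicts $\int_a^bW>0$ for large $n$, which yields the required $n_0$.
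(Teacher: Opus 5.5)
Your Gauss--Christoffel setup and the Chebyshev--Markov--Stieltjes bound $\int_{x_{n,j}}^{x_{n,j+1}}W\le w_{n,j}+w_{n,j+1}$ are correct. The gap is that they move the whole content of the theorem into the claim that these Christoffel numbers tend to $0$, and that claim is never proved. Your last paragraph is a heuristic, not an argument, and as formulated it would not suffice, for two reasons.

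First, decay of $\lambda_n$ uniformly on compacts is the wrong statement. The bracketing nodes $x_{n,j},x_{n,j+1}$ (and $x_{n,1}$ or $x_{n,n}$ in the one-sided case) are only known to lie in $[-M_n,M_n]$. What you need is $\sup_{|x_0|\le M_n}\lambda_n(x_0)\to0$.

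Second, your peaked polynomials are small only on an interval of length comparable to $M_n$ around $x_0$. Outside it they grow like a power of $|x-x_0|/M_n$ of order $\deg q$. The resulting tail of $\int q^2W\,dx$ cannot be controlled by finite moments alone: with only finite moments, $\lambda_n(x)\to0$ can fail even pointwise. For an indeterminate weight, $\lambda_n(x)$ decreases to $(\sum_k\tilde f_k(x)^2)^{-1}>0$, where $\tilde f_k$ are the orthonormalized $f_k$. The hypothesis $M_n=o(n)$ is a statement about the nodes, so it can reach this integral only through the quadrature rule, and your sketch never makes that connection.

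The connection can be made. For $2k\le 2n-2$ and $|x_0|\le M_n$,
$\int(x-x_0)^{2k}W\,dx=\sum_j w_{n,j}(x_{n,j}-x_0)^{2k}\le(2M_n)^{2k}\int W\,dx$.
This tames the tail once the Chebyshev scale is a large multiple of $M_n$. Combined with the uniform smallness of the $W$-mass of short windows, it does give $\sup_{|x_0|\le M_n}\lambda_n(x_0)\to0$, so your route is salvageable.

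The detour through $\lambda_n$ and the separation theorem is unnecessary, though. Apply the quadrature rule directly to a nonnegative polynomial that is $\ge1$ near the middle of $[a,b]$ and merely \emph{small} at the nodes. (Your first attempt asked it to vanish at all $n$ nodes, which forces it to be $\equiv0$.) Put $c=\frac{a+b}{2}$, $\delta=\frac{b-a}{2}$, $L_n=M_n+|c|$ (if $L_n\le\delta$, every zero already lies in $[a,b]$), and $m=\lfloor(2n-1)/4\rfloor$. Let $T_m$ be the Chebyshev polynomial of degree $m$, and set
\[
\rho_n(x)=\biggl(\frac{T_m\bigl(y_n(x)\bigr)}{T_m(1+\eta_n)}\biggr)^{2},\qquad y_n(x)=\frac{L_n^2+\delta^2-2(x-c)^2}{L_n^2-\delta^2},\qquad \eta_n=\frac{3\delta^2}{2(L_n^2-\delta^2)}.
\]
Then $\rho_n\ge0$ and $\deg\rho_n=4m\le 2n-1$. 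Also $\rho_n\ge1$ on $[c-\delta/2,c+\delta/2]$, where $y_n\ge1+\eta_n$. On $\{\delta\le|x-c|\le L_n\}$ we have $|y_n|\le1$, so $\rho_n\le T_m(1+\eta_n)^{-2}$ there. This set contains every node if $f_n$ has no zero in $[a,b]$, and for such $n$
\[
\int_{c-\delta/2}^{c+\delta/2}W\,dx\;\le\;\int_{\mathbb R}\rho_n W\,dx\;=\;\sum_{j=1}^{n}w_{n,j}\,\rho_n(x_{n,j})\;\le\;\frac{\int_{\mathbb R}W\,dx}{T_m(1+\eta_n)^{2}}.
\]
Positivity of $\rho_n$ gives the first inequality and exactness gives the equality, so the tails of $W$ never enter.

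Since $\operatorname{arccosh}(1+\eta_n)\ge\min\{1,\delta/L_n\}$, $m\sim n/2$ and $L_n=o(n)$, we get $T_m(1+\eta_n)=\cosh\bigl(m\operatorname{arccosh}(1+\eta_n)\bigr)\to\infty$. This is your width-$M_n/n$ Chebyshev heuristic, used where it actually closes the argument. Hence $f_n$ has a zero in $[a,b]$ for all large $n$, provided the left side is positive. To ensure that, first replace $[a,b]$ by a short subinterval centred at a point of $(a,b)$ in the support of $W$; such a point exists because $W\,dx$ has no atoms.
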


It follows that every non-empty closed interval of the real line contains Hermite polynomial zeros.     
As a consequence of the result above, we can give an alternate proof of this classical result using well-known properties of the position and momentum operators in quantum theory. In this approach, the density of the zeros of the Hermite polynomial on the real line emerges as a consequence of the convergence of the compressed position/momentum operators to the non-compressed position/momentum operator. Indeed, we consider the sequence of compressed position operators $\{\hat{q}_k\}$ as acting on the Hilbert space $l^2$ as given in  Equation~\ref{prop:eq2}  and $\hat{q}$ as given in Equation~\ref{eq:xphi}. All these operators are self-adjoint operators with a common core  $S(\NN)$. Additionally, for all $\ket{\psi} = (\psi_0, \psi_1, \dots, \psi_k, \dots)\in S(\NN)$, the sequence $\{(k-1)\psi_{k-2}^2\}$ converges to zero; therefore, 
\[ 
    \frac{\psi_1^2}{2} +\sum_{n=1}^{k-2} \frac{1}{2}\left(\psi_{n+1}\sqrt{n+1}+\psi_{n-1}\sqrt{n}\right)^2+\frac{\psi_{k-2}^2}{2}(k-1) \,\, \stackrel{k \rightarrow \infty}{\longrightarrow} \,\,  \frac{\psi_1^2}{2} +\sum_{n=1}^{\infty} \frac{1}{2}\left(\psi_{n+1}\sqrt{n+1}+\psi_{n-1}\sqrt{n}\right)^2.
\]
This means that for all $\ket{\psi}\in S(\NN)$, $\hat{q}_k\ket{\psi} \to \hat{q}\ket{\psi}$ in $l_2$; this tells us that the sequence of self-adjoint operators $\hat{q}_k$ converges to $\hat{q}$ in the strong resolvent sense (as defined in \cite[pg. 284]{reedsimonmethods}). This justifies the well-known result:

\begin{Coro}
    The zeros of the Hermite polynomial form a dense subset of the real line. 
\end{Coro}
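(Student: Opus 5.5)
We would prove the corollary by combining the strong resolvent convergence $\hat{q}_k \to \hat{q}$ just established with the standard spectral-inclusion property of that convergence. We would use the result in \cite{reedsimonmethods} (Theorem VIII.24): if self-adjoint operators $A_k$ converge to $A$ in the strong resolvent sense, then every point of $\sigma(A)$ is a limit of points of $\sigma(A_k)$. In other words, for every $\lambda\in\sigma(A)$ there exist $\lambda_k\in\sigma(A_k)$ with $\lambda_k\to\lambda$.

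First we record the two spectra involved.
\begin{itemize}
\item The spectrum of $\hat{q}$ on $\ell^2$ is all of $\RR$. Via the unitary identification of $\ell^2$ with $L^2(\RR)$ through the Hermite functions, $\hat{q}$ is multiplication by $x$, whose spectrum is $\RR$. One can also see this directly with approximate eigenvectors: normalized Gaussian bumps $e^{-(x-\lambda)^2/(2\epsilon)}$, as $\epsilon\to\infty$, give $\|(\hat{q}-\lambda)\psi\|\to 0$ for every real $\lambda$.
\item The spectrum of $\hat{q}_k$ as an operator on all of $\ell^2$ is $\{0\}\cup\{\lambda_{k,j}\}_{j=0}^{k-1}$. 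By \Cref{mainlem} and the block form in its proof, together with \Cref{Thm:eigenvectors}, the eigenvalues on $\mathcal V_k$ are exactly the zeros of $H_k$, and $\hat{q}_k$ vanishes on $\mathcal V_k^\perp$.
\end{itemize}

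Now fix $\lambda\in\RR$ and $\epsilon>0$. Spectral inclusion gives eigenvalues $\mu_k$ of $\hat{q}_k$ with $\mu_k\to\lambda$.

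\textbf{Main obstacle: the spurious eigenvalue $0$.} The point $0$ comes from $\mathcal V_k^\perp$, not from $H_k$. If $\lambda\neq 0$ and $\epsilon<|\lambda|$, then for large $k$ the $\mu_k$ lie in $(\lambda-\epsilon,\lambda+\epsilon)$, which excludes $0$, so they are genuine zeros of $H_k$. The case $\lambda=0$ cannot be handled this way, so we treat it separately. The neighbourhood $(-\epsilon,\epsilon)$ contains nonzero points $\lambda'$, and these are themselves limits of Hermite zeros by the previous case. Alternatively, for odd $k$ we have $H_k(0)=0$ outright, since $H_k$ is odd.

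A cleaner way to avoid the issue altogether is to apply the convergence argument to the restriction of $\hat{q}_k$ to $\mathcal V_k$ extended by a large constant on $\mathcal V_k^\perp$. We do not think this is necessary.

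Combining the two cases, every open interval of $\RR$ contains a zero of some $H_k$, so the union of all Hermite zeros is dense. The analogous argument with $\hat{p}_k\to\hat{p}$ gives the same conclusion, since the eigenvalues of $\hat{p}_k$ are the same zeros.
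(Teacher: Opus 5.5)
Your proposal is correct and follows the paper's own route: strong resolvent convergence $\hat q_k\to\hat q$ on the common core $s$, combined with $\sigma(\hat q)=\RR$. You make explicit the spectral-inclusion step (Reed--Simon, Theorem VIII.24(a)) that the paper leaves implicit, and you correctly handle the spurious eigenvalue $0$ coming from $\mathcal V_k^\perp$, which the paper passes over.

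There are two harmless slips, both in side remarks. First, the Gaussian approximate eigenvectors $e^{-(x-\lambda)^2/(2\epsilon)}$ require $\epsilon\to 0$, not $\epsilon\to\infty$, since $\|(\hat q-\lambda)\psi_\epsilon\|^2/\|\psi_\epsilon\|^2=\epsilon/2$. Second, padding $\mathcal V_k^\perp$ with a fixed constant only relocates the spurious eigenvalue; you would need $c_k\to\infty$, e.g.\ $c_k=k$, which still gives strong convergence on $s$.
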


\section{Displacement Operators and Approximate Quantum Error Correction}

We can also use the compressed operators to construct sequences of sets of unitary operators, preserving the compressed subspaces, that approximate the {\it displacement operators} which are fundamental in quantum optics.

\begin{Def}[Discrete-spectrum displacement operators]
For any positive integer $k$ and $\alpha, \beta\in \mathbb{R}$, we call the unitary operators $U_{k,\alpha} = \exp(i\alpha\hat{q}_k)$ and $V_{k,\beta} = \exp(-i\beta\hat{p}_k)$ the discrete-spectrum displacement operators. 
\end{Def}

The unitary operators $U_{k,\alpha}$, $V_{k,\beta}$ are defined via the (self-adjoint) functional calculus for $\hat{q}_k$, $\hat{p}_k$. Their explicit actions are given as follows. 

\begin{Prop}
    The discrete-spectrum displacement operators $U_{k, \alpha}$ and $V_{k, \beta}$ preserve the compressed space $\mathcal V_k$ and act on its basis $\{\ket{\lambda_{k,j}}_{q_k}\}$ as follows:
    \begin{align}
        U_{k, \alpha}\ket{\lambda_{k,j}}_{q_k} &= \exp(i\alpha\lambda_{j,k})\ket{\lambda_{k,j}}_{q_k},\label{eq:eqk}\\
       V_{k, \beta}\ket{\lambda_{k,j}}_{q_k}  &= \sum_{j''=0}^{k-1}\sum_{j'=0}^{k-1}I_k(\lambda_{k,j''}, \lambda_{k,j'})\overline{I_k(\lambda_{k,j}, \lambda_{k,j'})} \exp(-i\beta \lambda_{k,j'})\ket{\lambda_{k,j''}}_{q_k}.\label{eq:epk}
    \end{align}
\end{Prop}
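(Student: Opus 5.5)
The plan is to use the spectral decompositions of $\hat{q}_k$ and $\hat{p}_k$ from \Cref{Thm:eigenvectors}, together with the functional calculus, and then perform a change of basis.

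First, invariance of $\mathcal V_k$. Both $\hat{q}_k$ and $\hat{p}_k$ vanish on $\mathcal V_k^\perp$ and map $\mathcal V_k$ into itself. Hence every power $\hat{q}_k^m$ and $\hat{p}_k^m$ leaves $\mathcal V_k$ invariant. Since $\hat{q}_k$ is bounded, the exponential series $\sum_m (i\alpha)^m\hat{q}_k^m/m!$ converges in norm. It therefore also preserves $\mathcal V_k$ (and acts as the identity on $\mathcal V_k^\perp$); the same holds for $V_{k,\beta}$.

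Second, \eqref{eq:eqk}. Write $\hat{q}_k|_{\mathcal V_k}=\sum_j \lambda_{k,j}\ket{\lambda_{k,j}}_{q_k}\bra{\lambda_{k,j}}_{q_k}$, using the orthonormal eigenbasis from \Cref{Thm:eigenvectors}. The functional calculus gives $U_{k,\alpha}|_{\mathcal V_k}=\sum_j e^{i\alpha\lambda_{k,j}}\ket{\lambda_{k,j}}_{q_k}\bra{\lambda_{k,j}}_{q_k}$. Applying this to a basis vector yields \eqref{eq:eqk} immediately.

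Third, \eqref{eq:epk}. By the same reasoning, $V_{k,\beta}|_{\mathcal V_k}=\sum_{j'} e^{-i\beta\lambda_{k,j'}}\ket{\lambda_{k,j'}}_{p_k}\bra{\lambda_{k,j'}}_{p_k}$. Applied to $\ket{\lambda_{k,j}}_{q_k}$, this produces the coefficients ${}_{p_k}\bk{\lambda_{k,j'}}{\lambda_{k,j}}_{q_k}=\overline{I_k(\lambda_{k,j},\lambda_{k,j'})}$, by the definition \eqref{eq:changeofbasis} of the change-of-basis entries. Next, expand each $\ket{\lambda_{k,j'}}_{p_k}$ in the $q_k$-basis by inserting the resolution of identity $\sum_{j''}\ket{\lambda_{k,j''}}_{q_k}\bra{\lambda_{k,j''}}_{q_k}$ on $\mathcal V_k$. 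This gives $\ket{\lambda_{k,j'}}_{p_k}=\sum_{j''} I_k(\lambda_{k,j''},\lambda_{k,j'})\ket{\lambda_{k,j''}}_{q_k}$. Substituting and collecting terms produces the stated double sum.

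The main obstacle is bookkeeping of normalizations and conjugations. The vectors $P_k\ket{\lambda}_q$ in \Cref{Thm:eigenvectors} are not unit vectors, so $I_k$ as written in \eqref{eq:changeofbasis} is an honest inner product only if the eigenvectors are normalized. I would therefore state explicitly that $\ket{\lambda_{k,j}}_{q_k}$ and $\ket{\lambda_{k,j}}_{p_k}$ denote the normalized eigenvectors, with $I_k$ understood as the corresponding normalized overlap. I would also check which argument of $I_k$ is conjugated, using sesquilinearity, so that the placement of the bar in \eqref{eq:epk} is consistent with ${}_{q_k}\bk{\lambda_{k,j}}{\lambda_{k,j'}}_{p_k}=I_k(\lambda_{k,j},\lambda_{k,j'})$.
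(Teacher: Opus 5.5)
Your proposal is correct and follows essentially the same route as the paper. Invariance of $\mathcal V_k$ comes from the functional calculus, and $U_{k,\alpha}$ is handled through its spectral form $\sum_j e^{i\alpha\lambda_{k,j}}\ket{\lambda_{k,j}}_{q_k}\bra{\lambda_{k,j}}_{q_k}+P_k^\perp$. For $V_{k,\beta}$, you expand $\ket{\lambda_{k,j}}_{q_k}$ in the $p_k$-eigenbasis, apply $e^{-i\beta\lambda_{k,j'}}$, and re-expand in the $q_k$-basis via the overlaps $I_k$.

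Your normalization caveat is well taken. The paper's $I_k$ in \eqref{eq:changeofbasis} is the overlap of the unnormalized vectors $P_k\ket{\lambda}_q$ and $P_k\ket{\lambda}_p$. So the displayed formula is correct only if each $I_k(x,y)$ is divided by $\|P_k\ket{x}_q\|\,\|P_k\ket{y}_p\|$, exactly as you propose.
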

\begin{proof}
    From the functional calculus for $\hat{q}_k$, $\hat{p}_k$, the operators $U_{k,\alpha}$, $V_{k,\beta}$ leave $\mathcal V_k$ invariant (and hence also reduce it since they are unitary). Further,  we have
    \begin{equation*}
        U_{k,\alpha} = \exp(i\alpha\hat{q}_k) = \sum_{j=0}^{k-1}\exp(i\alpha\lambda_{k,j})\ket{\lambda_{k,j}}_{q_k}\bra{\lambda_{k,j}}_{q_k}+P_k^\perp .
    \end{equation*}
    As the $k$-compressed subspace is reducing for operators $U_{k,\alpha}$ and $V_{k, \beta}$, we will determine their representation with respect to a basis consisting of the eigenvectors of the compressed position operators. Indeed, recall we have from the change of basis matrix noted above, 
\begin{equation*}
    \ket{\lambda_{k,j}}_{q_k} = \sum_{j'=0}^{k-1}{}_{p_k}\bk{\lambda_{k,j'}}{\lambda_{k,j}}_{q_k} \ket{\lambda_{k,j'}}_{p_k} . 
\end{equation*}
Then, for $\beta\in \mathbb R$, we have  
\begin{align*}
  V_{k,\beta} \ket{\lambda_{k,j}}_{q_k} = \exp(-i\beta \hat{p}_k)\ket{\lambda_{k,j}}_{q_k} & = \sum_{j'=0}^{k-1}{}_{p_k}\bk{\lambda_{k,j'}}{\lambda_{k,j}}_{q_k} \exp(i\beta \lambda_{k,j'})\ket{\lambda_{k,j'}}_{p_k}\\
    &= \sum_{j''=0}^{k-1}\sum_{j'=0}^{k-1}{}_{q_k}\bk{\lambda_{k,j''}}{\lambda_{k,j'}}_{p_k}{}_{p_k}\bk{\lambda_{k,j'}}{\lambda_{k,j}}_{q_k} \exp(-i\beta \lambda_{k,j'})\ket{\lambda_{k,j''}}_{q_k}. 
\end{align*}
Again, using the change of basis matrix $\{{}_{q_k}\bk{\lambda_{k,j}}{\lambda_{k,i}}_{p_k}\}_{i,j}$ expressed in terms of the Hermite polynomials through the partial sum of the Melher Kernel function as in \Cref{eq:changeofbasis}, completes the proof.
\end{proof}

Note that the operators $U_{k, \alpha}$ and $V_{k, \beta}$ do not commute, regardless of the value of $\alpha$ and $\beta$, which can be seen to follow from Proposition~\ref{compressedcommutator}. However, the discrete-spectrum displacement operators converge to the actual displacement operators that may commute, which is an important feature used in quantum error correction \cite{gkp}.

\begin{Prop}
    For any fixed  $\alpha$, $\beta$ the sequences of $U_{k, \alpha}$ and  $V_{k, \beta}$ converge respectively to the operators $\exp(i\alpha \hat{q})$ and $\exp(-i\beta \hat{p})$ in the norm topology. 
\end{Prop}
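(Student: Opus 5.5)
The plan is to go through strong resolvent convergence, as in the density argument of the previous section, and then try to upgrade the resulting strong convergence of the unitary groups to convergence in operator norm. I treat $U_{k,\alpha}$ in detail; the case of $V_{k,\beta}$ follows by conjugating with the Fourier transform $F$. Indeed, Corollary~\ref{Thm:eigenvectors} and the Fourier-conjugacy of $\hat q$ and $\hat p$ give $\hat p_k = F^*\hat q_k F$ up to the same change of basis, so $V_{k,\beta}=F^*U_{k,-\beta}F$. Since $F$ is unitary, it preserves both the strong and the norm topology.

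\textbf{Step 1 (strong convergence).} We already showed that $\hat q_k\ket{\psi}\to\hat q\ket{\psi}$ for every $\ket\psi$ in the common core $S(\NN)$. By the Trotter--Kato theorem \cite{reedsimonmethods}, this strong graph convergence on a core gives strong resolvent convergence. That in turn gives $\exp(i\alpha\hat q_k)\to\exp(i\alpha\hat q)$ strongly, uniformly for $\alpha$ in compact sets. One bookkeeping point: the functional-calculus definition $U_{k,\alpha}=\sum_j e^{i\alpha\lambda_{k,j}}\kb{\lambda_{k,j}}{\lambda_{k,j}}_{q_k}+P_k^\perp$ agrees with $\exp(i\alpha\hat q_k)$ for the finite-rank self-adjoint operator $\hat q_k$, because $\hat q_k$ vanishes on $\mathcal V_k^\perp$.

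\textbf{Step 2 (attempted upgrade to the norm topology).} For fixed $m$, split
\[
\|U_{k,\alpha}-e^{i\alpha\hat q}\|\le \|(U_{k,\alpha}-e^{i\alpha\hat q})P_m\|+\|(U_{k,\alpha}-e^{i\alpha\hat q})P_m^\perp\|.
\]
The first term tends to $0$ as $k\to\infty$, because $P_m$ has finite rank and strong convergence is uniform on the compact unit ball of $\mathcal V_m$. So everything depends on making the tail term small uniformly in $k$ by taking $m$ large. My first attempt would be to use the explicit eigenvector description of Corollary~\ref{Thm:eigenvectors}, namely $\ket{\lambda_{k,j}}_{q_k}=P_k\ket{\lambda_{k,j}}_q$. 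The idea is to compare the spectral projections of $\hat q_k$ with those of $\hat q$ near the Hermite zeros, using the density of the roots, and to use the commutator estimate of Proposition~\ref{compressedcommutator} to control the defect $k\kb{k-1}{k-1}+P_k^\perp$.

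\textbf{Main obstacle.} The tail estimate in Step 2 is where I expect the argument to fail, and I do not think it can be rescued. On $\mathcal V_k^\perp$ the operator $U_{k,\alpha}$ is the identity, whereas $e^{i\alpha\hat q}$ is multiplication by $e^{i\alpha x}$. This unitary moves vectors of arbitrarily high photon number by an amount bounded below, independently of $k$; for instance, take translates of one fixed wavefunction far out in position, which lie essentially in $\mathcal V_k^\perp$. So I expect $\|(U_{k,\alpha}-e^{i\alpha\hat q})P_k^\perp\|$ to stay bounded away from $0$ for every $\alpha\neq0$. If that is right, the honest conclusion of this plan is convergence in the strong operator topology, uniformly on compact sets of $\alpha,\beta$. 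Norm convergence should be checked against exactly this tail term before it is claimed.
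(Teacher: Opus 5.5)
You are right not to claim norm convergence. The statement as written is false for every $\alpha\neq0$ (resp.\ $\beta\neq0$), and the paper's proof fails at exactly the step you flagged. Your Step~1 is the paper's own first step: strong convergence of $e^{i\alpha\hat q_k}$ via Trotter's theorem. Your reduction $V_{k,\beta}=F^*U_{k,-\beta}F$ is also valid, because $F$ is diagonal in the Fock basis, hence commutes with $P_k$, and $F^*\hat qF=\hat p$.

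The paper then upgrades to the norm topology by saying the operators are bounded, ``hence by the uniform boundedness theorem'' the convergence is in norm. That inference is invalid. Uniform boundedness only gives $\sup_k\|U_{k,\alpha}\|<\infty$, which is automatic for unitaries. It never turns strong convergence into norm convergence: $P_k\to I$ strongly while $\|I-P_k\|=1$ for all $k$. So no spectral comparison near the Hermite zeros can rescue your Step~2.

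Your tail heuristic can be made rigorous and sharp. Write
\[
U_{k,\alpha}-e^{i\alpha\hat q}=(I-e^{i\alpha\hat q})-(I-U_{k,\alpha}),
\]
where $I-U_{k,\alpha}=\sum_j(1-e^{i\alpha\lambda_{k,j}})\ket{\lambda_{k,j}}_{q_k}\bra{\lambda_{k,j}}_{q_k}$ has finite rank.

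\begin{itemize}
\item Under $\ell^2\cong L^2(\mathbb{R})$, the operator $I-e^{i\alpha\hat q}$ is multiplication by $f(x)=1-e^{i\alpha x}$, and $\|f\|_\infty=2$.
\item For $\epsilon>0$ the set $\{|f|>2-\epsilon\}$ has positive Lebesgue measure, so $L^2$ of that set contains an orthonormal sequence $(e_n)$.
\item Since $e_n\to0$ weakly and $I-U_{k,\alpha}$ is finite rank, $(I-U_{k,\alpha})e_n\to0$ in norm.
\item Therefore $\|U_{k,\alpha}-e^{i\alpha\hat q}\|\geq 2-\epsilon$ for every $\epsilon>0$.
\end{itemize}

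Since both operators are unitary, the difference has norm at most $2$, so $\|U_{k,\alpha}-e^{i\alpha\hat q}\|=2$ for every $k$. By your Fourier conjugation, $\|V_{k,\beta}-e^{-i\beta\hat p}\|=2$ as well. Equivalently, a norm limit would make $e^{i\alpha\hat q}-I$ compact, but it is a nonzero normal operator with no eigenvalues. Your translate example also works, once the profile and offset are chosen so that $|1-e^{i\alpha x}|$ stays bounded below on most of its mass.

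What is true is exactly what your Step~1 proves: convergence in the strong operator topology, uniformly for $\alpha,\beta$ in compact sets. That is enough for the later limits taken against a fixed $\bra{n}$. It gives strong, not norm, convergence of the commutators $[U_{k,\alpha},V_{k,\beta}]$.
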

\begin{proof}
By Trotter's theorem \cite[Theorem VIII.21]{reedsimonmethods}, the operators $e^{i\alpha \hat{q}_k}$ converge in the strong operator topology to the operator $e^{i\alpha \hat{q}}$. In other words, for all $\ket{\psi}\in l^2$, we have 
$
    e^{i\alpha \hat q}\ket{\psi} = \lim_{k\to \infty} e^{i\alpha \hat q_k}\ket{\psi}.
$
In addition, the operators $e^{i\alpha \hat{q}_k}$ are all bounded, hence applying the uniform boundedness theorem, the sequence of operators $e^{i\alpha \hat{q}_k}$ converges in norm to $e^{i\alpha\hat{q}}$. This argument works as well in proving that $V_{k, \beta}$ converges to the operator $\exp(-i\beta \hat{p})$ in the norm topology.
\end{proof}

A consequence of the norm convergence above is that the commutator $[U_{k,\alpha}, V_{k, \beta}]$ of the operators $U_{k,\alpha}$ and $V_{k, \beta}$ converges to $[\exp(i\alpha \hat{q}), \exp(-i\beta \hat{p})]$. Now, we consider the case where the points of convergence commute, which occurs for $\alpha$ and $\beta$ verifying the equation $\alpha = \beta = 2\sqrt{\pi}$. In \cite{gkp}, such operators generate an abelian group that stabilizes a $2$-dimensional (functional) space with basis 
\begin{equation*} 
\bra{0} = \sum_{m = -\infty}^{\infty}\bra{\, 2m\sqrt{\pi}\, }_q \quad \mathrm{and} \quad 
\bra{1} = \sum_{m = -\infty}^{\infty}\bra{\, (2m+1)\sqrt{\pi} \,}_q . 
\end{equation*} 
In other words, these states are generalized $+1$-eigenstates (in the sense we discussed above) of the  displacement operators $\exp(i2\sqrt{\pi} \hat{q})$ and $\exp(-i2\sqrt{\pi} \hat{p})$. Returning to our converging operators $U_{2\sqrt{\pi}, k}$  and $V_{2\sqrt{\pi}, k}$, we would like to show that they approximately (as they do not commute) stabilize a two-dimensional subspace. 

Note that the operators $U_{2\sqrt{\pi}, k}-I$  and $V_{2\sqrt{\pi}, k}-I$ converge, as $k\rightarrow \infty$, respectively to $\exp(i2\sqrt{\pi} \hat{q})-I$ and $\exp(-i2\sqrt{\pi} \hat{p})-I$ in the norm topology. As this implies convergence in the weak operator topology, for all $\ket{x}\in s \subset \ell_2$ and a fixed basis state $\ket{n}$ we have  
\begin{equation}
\lim_{k \rightarrow \infty}    \bra{n}(U_{2\sqrt{\pi}, k}-I)\ket{x} = \bra{n}(\exp(-i2\sqrt{\pi} \hat{q})-I)\ket{x} \mbox{ and } \lim_{k \rightarrow \infty} \bra{n}(V_{2\sqrt{\pi}, k}-I)\ket{x}=  \bra{n}(\exp(-i2\sqrt{\pi} \hat{p})-I)\ket{x}. \label{eq:weak_convergence}
\end{equation}
Now, for $k\geq 1$, we define the states $\ket{0_k}$ and $\ket{1_k}$ in $\mathcal V_k$, which are truncated versions of the (generalized) vectors $\ket{0}$ and $\ket{1}$. These are given by   
\begin{align*}
    \ket{0_k} & =\sum_{n=0}^{k-1} \sum_{m=-\infty}^{\infty}\bk{n}{ 2m\sqrt{\pi}}_q\ket{n} =  \sum_{n=0}^{k-1} \sum_{m=-\infty}^{\infty}\phi_n( 2m\sqrt{\pi})\ket{n},\\
    \ket{1_k} & =\sum_{n=0}^{k-1} \sum_{m=-\infty}^{\infty}\bk{n}{(2m+1)\sqrt{\pi}}_q\ket{n}=  \sum_{n=0}^{k-1} \sum_{m=-\infty}^{\infty}\phi_n( (2m+1)\sqrt{\pi})\ket{n},
\end{align*}
where we have also recalled the connection with the Hermite polynomials in these expressions. Using \Cref{eq:weak_convergence} and the norm convergence of the unitary operators, we have the following asymptotic relations:   
\begin{equation*}
     \bra{0_k}(U_{2\sqrt{\pi}, k}-I)\ket{x} =   \bra{0_k}(\exp(i2\sqrt{\pi} \hat{q})-I)\ket{x} \mbox{ and } \bra{1_k}(V_{2\sqrt{\pi}, k}-I)\ket{x} =   \bra{1_k}(\exp(-i2\sqrt{\pi} \hat{p})-I)\ket{x}.
\end{equation*}
Taken together, this gives us 
\begin{equation}
     \lim_{k\to \infty}\bra{0_k}(U_{2\sqrt{\pi}, k}-I)\ket{x}\to  \bra{0}(\exp(i2\sqrt{\pi} \hat{q})-I)\ket{x} \mbox{ and } \lim_{k\to \infty}\bra{1_k}(V_{2\sqrt{\pi}, k}-I)\ket{x}\to  \bra{1}(\exp(-i2\sqrt{\pi} \hat{p})-I)\ket{x}.\label{eq:convergence}
\end{equation}
Indeed, to start, note that the right-hand side of the above convergence expressions is well defined. In addition, the operator $\exp(i2\sqrt{\pi} \hat{q})$ (respectively $\exp(i2\sqrt{\pi} \hat{p})$) maps the Schwartz space back to itself as it multiplies (respectively shifts) the Schwartz function associated to an element in $s$. Further, the sequence of linear functionals $\bra{0_k} = \bra{0} P_k$ (respectively $\bra{1_k}$) converges strongly (i.e., pointwise) on $s$ to the functional $\bra{0}$ (respectively $\bra{1}$).

We can thus state the following result. 

\begin{Prop}
    The truncated states $\ket{0_k}$ and $\ket{1_k}$, for $k\geq 1$, define a $2$-dimensional subspace of the compressed space $\mathcal V_k$ that is approximately stabilized by the operators $U_{2\sqrt{\pi}, k}$  and $V_{2\sqrt{\pi}, k}$ in the sense given by the limits of  \Cref{eq:convergence}.
\end{Prop}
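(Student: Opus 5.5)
The plan is to verify three things in turn: that $\ket{0_k}$ and $\ket{1_k}$ are well-defined vectors of $\mathcal V_k$, that they span a genuinely $2$-dimensional subspace (at least for all $k$ beyond some small threshold), and that the limits in \Cref{eq:convergence} hold. The last item is the approximate-stabilization property. The actual content lies in the first two items and in making the limit argument sketched before the statement rigorous.

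\textbf{Step 1: well-definedness.} Fix $n\le k-1$. The Hermite function satisfies $|\phi_n(x)|\le C_n(1+|x|^n)e^{-x^2/2}$. Hence the series $\sum_{m}\phi_n(2m\sqrt{\pi})$ and $\sum_m\phi_n((2m+1)\sqrt{\pi})$ converge absolutely, so each coefficient of $\ket{0_k}$ and $\ket{1_k}$ is a finite number. Both vectors are therefore finite linear combinations of $\ket{0},\dots,\ket{k-1}$ and lie in $\mathcal V_k$.

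\textbf{Step 2: linear independence.} This is the step I expect to be the main obstacle. I would avoid computing with individual coefficients and instead argue by parity and limits.

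First, each of the point sets $2\sqrt{\pi}\mathbb Z$ and $(2\mathbb Z+1)\sqrt{\pi}$ is symmetric under $x\mapsto -x$. Since $\phi_n$ has parity $(-1)^n$, all odd-$n$ coefficients vanish for both vectors. So both vectors lie in the span of the even Fock states, and parity alone does not separate them.

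Next, compare the $n=0$ coefficients. These are $\pi^{-1/4}\sum_m e^{-2\pi m^2}$ and $\pi^{-1/4}\sum_m e^{-\pi(2m+1)^2/2}$, which are theta values, both positive and clearly distinct. Then compare the $n=2$ coefficients, which are the same lattice sums weighted by $H_2(x)=4x^2-2$. Independence reduces to the nonvanishing of the $2\times2$ determinant formed from the $n=0$ and $n=2$ coefficients.

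That determinant compares two ratios. For the even lattice, the Gaussian weight concentrates at $x=0$, where $4x^2-2=-2$. For the odd lattice, the dominant points are $x=\pm\sqrt{\pi}$, where $4x^2-2=4\pi-2>0$. So the $n=2$ coefficient of $\ket{0_k}$ is negative and that of $\ket{1_k}$ is positive, while both $n=0$ coefficients are positive. The determinant is therefore nonzero, and this should be confirmed by a short tail estimate on the lattice sums. Consequently, for every $k\ge 3$ the vectors span a $2$-dimensional subspace of $\mathcal V_k$. For $k\le 2$ only the $n=0$ coefficient survives, so there the statement should be read as holding for $k$ large.

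\textbf{Step 3: the limits.} Fix $\ket{x}\in s$. I would split
\[
\bra{0_k}(U_{2\sqrt{\pi},k}-I)\ket{x}-\bra{0}(e^{i2\sqrt{\pi}\hat q}-I)\ket{x}
=\bra{0}P_k\big[(U_{2\sqrt{\pi},k}-e^{i2\sqrt{\pi}\hat q})\ket{x}\big]+\big(\bra{0}P_k-\bra{0}\big)(e^{i2\sqrt{\pi}\hat q}-I)\ket{x}.
\]
The second term tends to zero. The reason is that $(e^{i2\sqrt{\pi}\hat q}-I)\ket{x}\in s$, since multiplication by $e^{i2\sqrt{\pi}x}$ preserves Schwartz functions, and $\bra{0}P_k\to\bra{0}$ pointwise on $s$, because the coefficients of $\bra{0}$ are polynomially bounded.

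For the first term, $\|\bra{0}P_k\|$ grows with $k$. So I would not rely only on the norm convergence of the preceding proposition. Instead I would use that the coefficients of $\bra{0}$ grow at most polynomially, together with a rapid-decay estimate for $(U_{2\sqrt{\pi},k}-e^{i2\sqrt{\pi}\hat q})\ket{x}$ in the weighted norms defining $s$. Equivalently, one can bound the difference by $\|\bra{0}P_k\|\,\|(U_{2\sqrt{\pi},k}-e^{i2\sqrt{\pi}\hat q})\ket{x}\|$ and show, via Duhamel's formula
\[
U_{2\sqrt{\pi},k}-e^{i2\sqrt{\pi}\hat q}=\int_0^1 \frac{d}{dt}\Big(e^{i2\sqrt{\pi}t\hat q_k}e^{i2\sqrt{\pi}(1-t)\hat q}\Big)\,dt,
\]
that the second factor decays faster than any power of $k$. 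The integrand involves $(\hat q_k-\hat q)$ applied to $e^{i2\sqrt{\pi}(1-t)\hat q}\ket{x}\in s$, which is supported near Fock index $k$ and is rapidly small there. The same argument, with the translation $e^{-i2\sqrt{\pi}\hat p}$ preserving $s$, handles $\ket{1_k}$ and $V_{2\sqrt{\pi},k}$. This yields \Cref{eq:convergence}.
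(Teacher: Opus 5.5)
Your proposal is correct, and it differs from the paper's argument (the discussion preceding the statement) in two substantive ways.

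First, the paper never checks that $\ket{0_k}$ and $\ket{1_k}$ are linearly independent. Your sign argument on the $n=0$ and $n=2$ coefficients supplies this. Your parity observation also shows that the statement is false as written for $k=1,2$, where both vectors are multiples of $\ket{0}$, so it must be read for $k\ge 3$.

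Second, the two arguments differ in how they handle the limits. The paper replaces $U_{2\sqrt{\pi},k}$ by $e^{i2\sqrt{\pi}\hat q}$ inside $\bra{0_k}\cdot\ket{x}$ by invoking the norm convergence of the preceding proposition. It then uses, exactly as in your second term, that $\bra{0}P_k\to\bra{0}$ pointwise on $s$ and that $s$ is invariant. That first step does not hold up, for two reasons:
\begin{itemize}
\item Since $\|\bra{0}P_k\|\to\infty$, norm convergence without a rate would not suffice.
\item Norm convergence in fact fails. $U_{k,\alpha}\ket{N}=\ket{N}$ for $N\ge k$, while $\bra{N}e^{i\alpha\hat q}\ket{N}=e^{-\alpha^2/4}L_N(\alpha^2/2)\to 0$, with $L_N$ the Laguerre polynomial. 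Hence $\|U_{k,\alpha}-e^{i\alpha\hat q}\|\ge\sqrt{2}$ for every $k$ when $\alpha\neq 0$.
\end{itemize}
Your Duhamel estimate gives $\|(U_{2\sqrt{\pi},k}-e^{i2\sqrt{\pi}\hat q})\ket{x}\|=O(k^{-M})$ for every $M$ when $\ket{x}\in s$. This is exactly what beats the polynomial growth of $\|\bra{0}P_k\|$, so your route is the one that actually closes the argument.

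When you write it out, tighten these points:
\begin{itemize}
\item $(\hat q_k-\hat q)y=-\sqrt{k/2}\,y_{k-1}\ket{k}-\hat q P_k^\perp y$ is supported on Fock indices $\ge k-1$, not just near $k$. Its norm is controlled by $\sqrt{k}\,|y_{k-1}|+\big(\sum_{n\ge k}(n+1)|y_n|^2\big)^{1/2}$.
\item The $s$-seminorms of $e^{i2\sqrt{\pi}(1-t)\hat q}\ket{x}$ must be bounded uniformly in $t\in[0,1]$. This holds because conjugation by $e^{i\beta\hat q}$ shifts $\hat p$ by a constant with $|\beta|\le 2\sqrt{\pi}$.
\item The polynomial bound on $\sum_m\phi_n(2m\sqrt{\pi})$ follows from the Dirac comb being a tempered distribution.
\item State explicitly that the limiting right-hand sides are $0$, since $e^{i2\sqrt{\pi}\cdot 2m\sqrt{\pi}}=1$ and a shift by $2\sqrt{\pi}$ preserves both lattices. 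This is what makes the conclusion an approximate stabilization.
\end{itemize}
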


\begin{Rm}
 A central class of error correcting codes in photonic quantum computing, the GKP codes \cite{gkp}, are built as infinite-dimensional extensions of finite-dimensional codes. In these codes, as noted above, a specific set of (commuting) displacement operators $\exp(i\alpha \hat{q}), \exp(-i\beta \hat{p})$ act as stabilizers, while others serve as logical (generalized) Pauli operators. Additionally, displacement operators, also called Weyl operators, form an orthogonal complete set on the space of operators on $L^2(\mathbb R)$, and so can be used to describe the noise processes affecting the system. The goal to view GKP codes as finite-dimensional limits, appropriately formulated, gave the initial impetus for this work and is an ongoing investigation. We expect the calculations presented in this section will be useful in that context. 

 We further note that techniques from matrix and operator theory developed to study `higher rank numerical ranges' \cite{choi2006higher,choi2006quantum,gau2011higher,martinez2008higher,woerdeman2008higher}, which were originally motivated by (finite-dimensional) quantum error correction, could be useful in this setting. Such numerical ranges are obtained through compressions of certain classes of operators, as is the case in this work. 
\end{Rm}

\vspace{0.1in} 

{\noindent}{\bf Acknowledgements.} We are grateful to Rafael N. Alexander for helpful conversations. S.A. was supported by a Mitacs Accelerate Fellowship. D.W.K. was supported by NSERC Discovery Grant RGPIN-2024-400160. R.P. was supported by the NSERC Discovery Grant RGPIN-2022-04149.

\bibliographystyle{plain}
\bibliography{refs}

@article{trzetrzelewski2003quantum,
  title={Quantum systems in a cut {F}ock space},
  author={Trzetrzelewski, Maciej and Wosiek, Jacek},
  journal={Acta Physica Polonica B},
  volume = {35},
  year={2004},
  pages = {1615--1624},
}

@article{deNeeve,
	author = {de Neeve, Brennan and Nguyen, Thanh-Long and Behrle, Tanja and Home, Jonathan P.},
	date = {2022/03/01},
	doi = {10.1038/s41567-021-01487-7},
	id = {de Neeve2022},
	isbn = {1745-2481},
	journal = {Nature Physics},
	number = {3},
	pages = {296--300},
	title = {Error correction of a logical grid state qubit by dissipative pumping},
	url = {https://doi.org/10.1038/s41567-021-01487-7},
	volume = {18},
	year = {2022}}

@article{fluhmann,
	author = {Fl{\"u}hmann, C. and Nguyen, T. L. and Marinelli, M. and Negnevitsky, V. and Mehta, K. and Home, J. P.},
	date = {2019/02/01},
	doi = {10.1038/s41586-019-0960-6},
	id = {Fl{\"u}hmann2019},
	isbn = {1476-4687},
	journal = {Nature},
	number = {7745},
	pages = {513--517},
	title = {Encoding a qubit in a trapped-ion mechanical oscillator},
	url = {https://doi.org/10.1038/s41586-019-0960-6},
	volume = {566},
	year = {2019}}

@article{doi:10.1126/science.adk7560,
	author = {Shunya Konno and Warit Asavanant and Fumiya Hanamura and Hironari Nagayoshi and Kosuke Fukui and Atsushi Sakaguchi and Ryuhoh Ide and Fumihiro China and Masahiro Yabuno and Shigehito Miki and Hirotaka Terai and Kan Takase and Mamoru Endo and Petr Marek and Radim Filip and Peter van Loock and Akira Furusawa},
	doi = {10.1126/science.adk7560},
	eprint = {https://www.science.org/doi/pdf/10.1126/science.adk7560},
	journal = {Science},
	number = {6680},
	pages = {289-293},
	title = {Logical states for fault-tolerant quantum computation with propagating light},
	url = {https://www.science.org/doi/abs/10.1126/science.adk7560},
	volume = {383},
	year = {2024}}

@article{PhysRevLett.132.150607,
  title = {Autonomous Quantum Error Correction of {G}ottesman-{K}itaev-{P}reskill States},
  author = {Lachance-Quirion, Dany and Lemonde, Marc-Antoine and Simoneau, Jean Olivier and St-Jean, Lucas and Lemieux, Pascal and Turcotte, Sara and Wright, Wyatt and Lacroix, Am\'elie and Fr\'echette-Viens, Jo\"elle and Shillito, Ross and Hopfmueller, Florian and Tremblay, Maxime and Frattini, Nicholas E. and Camirand Lemyre, Julien and St-Jean, Philippe},
  journal = {Physical Review Letters},
  volume = {132},
  issue = {15},
  pages = {150607},
  numpages = {6},
  year = {2024},
  month = {Apr},
  publisher = {American Physical Society},
  doi = {10.1103/PhysRevLett.132.150607},
  url = {https://link.aps.org/doi/10.1103/PhysRevLett.132.150607}
}

@article{PRXQuantum.3.030301,
  title = {Robust Preparation of Wigner-Negative States with Optimized SNAP-Displacement Sequences},
  author = {Kudra, Marina and Kervinen, Mikael and Strandberg, Ingrid and Ahmed, Shahnawaz and Scigliuzzo, Marco and Osman, Amr and Lozano, Daniel P\'erez and Thol\'en, Mats O. and Borgani, Riccardo and Haviland, David B. and Ferrini, Giulia and Bylander, Jonas and Kockum, Anton Frisk and Quijandr\'{\i}a, Fernando and Delsing, Per and Gasparinetti, Simone},
  journal = {PRX Quantum},
  volume = {3},
  issue = {3},
  pages = {030301},
  numpages = {12},
  year = {2022},
  month = {Jul},
  publisher = {American Physical Society},
  doi = {10.1103/PRXQuantum.3.030301},
  url = {https://link.aps.org/doi/10.1103/PRXQuantum.3.030301}
}

@article{sivak,
	author = {Sivak, V. V. and Eickbusch, A. and Royer, B. and Singh, S. and Tsioutsios, I. and Ganjam, S. and Miano, A. and Brock, B. L. and Ding, A. Z. and Frunzio, L. and Girvin, S. M. and Schoelkopf, R. J. and Devoret, M. H.},
	date = {2023/04/01},
	doi = {10.1038/s41586-023-05782-6},
	id = {Sivak2023},
	isbn = {1476-4687},
	journal = {Nature},
	number = {7955},
	pages = {50--55},
	title = {Real-time quantum error correction beyond break-even},
	url = {https://doi.org/10.1038/s41586-023-05782-6},
	volume = {616},
	year = {2023}}

@article{Eickbusch,
	author = {Eickbusch, Alec and Sivak, Volodymyr and Ding, Andy Z. and Elder, Salvatore S. and Jha, Shantanu R. and Venkatraman, Jayameenakshi and Royer, Baptiste and Girvin, S. M. and Schoelkopf, Robert J. and Devoret, Michel H.},
	date = {2022/12/01},
	doi = {10.1038/s41567-022-01776-9},
	id = {Eickbusch2022},
	isbn = {1745-2481},
	journal = {Nature Physics},
	number = {12},
	pages = {1464--1469},
	title = {Fast universal control of an oscillator with weak dispersive coupling to a qubit},
	url = {https://doi.org/10.1038/s41567-022-01776-9},
	volume = {18},
	year = {2022}}

@article{ibarcq,
	author = {Campagne-Ibarcq, P. and Eickbusch, A. and Touzard, S. and Zalys-Geller, E. and Frattini, N. E. and Sivak, V. V. and Reinhold, P. and Puri, S. and Shankar, S. and Schoelkopf, R. J. and Frunzio, L. and Mirrahimi, M. and Devoret, M. H.},
	date = {2020/08/01},
	doi = {10.1038/s41586-020-2603-3},
	id = {Campagne-Ibarcq2020},
	isbn = {1476-4687},
	journal = {Nature},
	number = {7821},
	pages = {368--372},
	title = {Quantum error correction of a qubit encoded in grid states of an oscillator},
	url = {https://doi.org/10.1038/s41586-020-2603-3},
	volume = {584},
	year = {2020}}

@article{Bourassa2021blueprintscalable,
  doi = {10.22331/q-2021-02-04-392},
  url = {https://doi.org/10.22331/q-2021-02-04-392},
  title = {Blueprint for a {S}calable {P}hotonic {F}ault-{T}olerant {Q}uantum {C}omputer},
  author = {Bourassa, J. Eli and Alexander, Rafael N. and Vasmer, Michael and Patil, Ashlesha and Tzitrin, Ilan and Matsuura, Takaya and Su, Daiqin and Baragiola, Ben Q. and Guha, Saikat and Dauphinais, Guillaume and Sabapathy, Krishna K. and Menicucci, Nicolas C. and Dhand, Ish},
  journal = {{Quantum}},
  issn = {2521-327X},
  publisher = {{Verein zur F{\"{o}}rderung des Open Access Publizierens in den Quantenwissenschaften}},
  volume = {5},
  pages = {392},
  month = feb,
  year = {2021}
}

@article{BRADY2024100496,
	author = {Anthony J. Brady and Alec Eickbusch and Shraddha Singh and Jing Wu and Quntao Zhuang},
	doi = {https://doi.org/10.1016/j.pquantelec.2023.100496},
	issn = {0079-6727},
	journal = {Progress in Quantum Electronics},
	pages = {100496},
	title = {Advances in bosonic quantum error correction with {G}ottesman--{K}itaev--{P}reskill Codes: Theory, engineering and applications},
	url = {https://www.sciencedirect.com/science/article/pii/S0079672723000459},
	volume = {93},
	year = {2024}}

@article{log-gates-GKP,
	author = {Matsos, V. G. and Valahu, C. H. and Millican, M. J. and Navickas, T. and Kolesnikow, X. C. and Biercuk, M. J. and Tan, T. R.},
	date = {2025/10/01},
	doi = {10.1038/s41567-025-03002-8},
	id = {Matsos2025},
	isbn = {1745-2481},
	journal = {Nature Physics},
	number = {10},
	pages = {1664--1669},
	title = {Universal quantum gate set for {G}ottesman--{K}itaev--{P}reskill logical qubits},
	url = {https://doi.org/10.1038/s41567-025-03002-8},
	volume = {21},
	year = {2025}}

@article{IC-gkp,
	author = {Larsen, M. V. and Bourassa, J. E. and Kocsis, S. and Tasker, J. F. and Chadwick, R. S. and Gonz{\'a}lez-Arciniegas, C. and Hastrup, J. and Lopetegui-Gonz{\'a}lez, C. E. and Miatto, F. M. and Motamedi, A. and Noro, R. and Roeland, G. and Baby, R. and Chen, H. and Contu, P. and Di Luch, I. and Drago, C. and Giesbrecht, M. and Grainge, T. and Krasnokutska, I. and Menotti, M. and Morrison, B. and Puviraj, C. and Rezaei Shad, K. and Hussain, B. and McMahon, J. and Ortmann, J. E. and Collins, M. J. and Ma, C. and Phillips, D. S. and Seymour, M. and Tang, Q. Y. and Yang, B. and Vernon, Z. and Alexander, R. N. and Mahler, D. H.},
	date = {2025/06/01},
	doi = {10.1038/s41586-025-09044-5},
	id = {Larsen2025},
	isbn = {1476-4687},
	journal = {Nature},
	number = {8068},
	pages = {587--591},
	title = {Integrated photonic source of {G}ottesman--{K}itaev--{P}reskill qubits},
	url = {https://doi.org/10.1038/s41586-025-09044-5},
	volume = {642},
	year = {2025}}

@article{gkp,
  title = {Encoding a qubit in an oscillator},
  author = {Gottesman, Daniel and Kitaev, Alexei and Preskill, John},
  journal = {Physical Review A},
  volume = {64},
  issue = {1},
  pages = {012310},
  numpages = {21},
  year = {2001},
  month = {Jun},
  publisher = {American Physical Society},
  doi = {10.1103/PhysRevA.64.012310},
  url = {https://link.aps.org/doi/10.1103/PhysRevA.64.012310}
}

@incollection{dominici2010asymptotic,
  title={Asymptotic Analysis of the Zeros of {H}ermite Polynomials},
  author={Dominici, Diego},
  booktitle={Progress in Industrial Mathematics at ECMI 2008},
  pages={99--103},
  year={2010},
  publisher={Springer}
}

@book{szeg1939orthogonal,
  title={Orthogonal polynomials},
  author={Szeg\H{o}, G\'{a}bor},
  volume={23},
  year={1939},
  publisher={American Mathematical Soc.}
}

@book{brezis2011functional,
  title={Functional analysis, Sobolev spaces and partial differential equations},
  author={Br{\'e}zis, Haim},
  volume={2},
  year={2011},
  publisher={Springer}
}

@article{laplace1810memoire,
  title={M{\'e}moire sur les int{\'e}grales d{\'e}finies et leur application aux probabilit{\'e}s, et sp{\'e}cialement a la recherche du milieu qu’il faut choisir entre les r{\'e}sultats des observations},
  author={Laplace, Pierre-Simon},
  journal={Mem. Acad. Sci.(I), XI, Section V},
  pages={375--387},
  year={1810}
}

@book{marquis1820theorie,
  title={Th{\'e}orie analytique des probabilit{\'e}s},
  author={Laplace, Pierre Simon},
  volume={7},
  year={1820},
  publisher={Courcier}
}

@article{chebyshev1859developpement,
  title={Sur le d{\'e}veloppement des fonctions {\`a} une seule variable},
  author={Chebyshev, Pafnutii Lvovich},
  journal={Bull. Acad. Sci. St. Petersb},
  volume={1},
  number={193-200},
  pages={124},
  year={1859}
}

@book{hermite1864nouveau,
  title={Sur un nouveau d{\'e}veloppement en s{\'e}rie des fonctions},
  author={Hermite, M},
  year={1864},
  publisher={Imprimerie de Gauthier-Villars}
}

@article{moretti2017spectral,
  title={Spectral theory and quantum mechanics},
  author={Moretti, Valter},
  journal={UNITEXT, Italy: Springer International Publishing AG},
  year={2017},
  publisher={Springer}
}

@book{reedsimonmethods,
  title={Methods of modern mathematical physics: Functional analysis},
  author={Reed, Michael and Simon, Barry},
  year={1972},
  publisher={Elsevier}
}

@book{chihara2011introduction,
  title={An introduction to orthogonal polynomials},
  author={Chihara, Theodore S.},
  year={2011},
  publisher={Courier Corporation}
}

@book{borwein2012polynomials,
  title={Polynomials and polynomial inequalities},
  author={Borwein, Peter and Erd{\'e}lyi, Tam{\'a}s},
  volume={161},
  year={2012},
  publisher={Springer Science \& Business Media}
}

@article{choi2006higher,
  title={Higher-rank numerical ranges and compression problems},
  author={Choi, Man-Duen and Kribs, David W and {\.Z}yczkowski, Karol},
  journal={Linear algebra and its applications},
  volume={418},
  number={2-3},
  pages={828--839},
  year={2006},
  publisher={Elsevier}
}

@article{choi2006quantum,
  title={Quantum error correcting codes from the compression formalism},
  author={Choi, Man-Duen and Kribs, David W and {\.Z}yczkowski, Karol},
  journal={Reports on Mathematical Physics},
  volume={58},
  number={1},
  pages={77--91},
  year={2006},
  publisher={Elsevier}
}

@article{martinez2008higher,
  title={Higher-rank numerical range in infinite-dimensional {H}ilbert space},
  author={Martinez-Avendano, Ruben},
  journal={Operators and Matrices},
    volume={2},
  number={},
  pages={249--264},
  year={2008}
}

@article{gau2011higher,
  title={Higher rank numerical ranges of normal matrices},
  author={Gau, Hwa-Long and Li, Chi-Kwong and Poon, Yiu-Tung and Sze, Nung-Sing},
  journal={SIAM Journal on Matrix Analysis and Applications},
  volume={32},
  number={1},
  pages={23--43},
  year={2011},
  publisher={SIAM}
}

@article{woerdeman2008higher,
  title={The higher rank numerical range is convex},
  author={Woerdeman, Hugo J},
  journal={Linear and Multilinear Algebra},
  volume={56},
  number={1-2},
  pages={65--67},
  year={2008},
  publisher={Taylor \& Francis}
}

\end{document}